\documentclass[twoside,leqno,11pt]{article}
\usepackage{graphicx,fullpage} 
\usepackage{amsfonts,amsmath,amsthm,amssymb,mathtools, comment}\usepackage{xcolor}
\usepackage[T1]{fontenc}
\usepackage[colorlinks,citecolor=blue,linkcolor=blue,urlcolor=red,pagebackref]{hyperref}
\usepackage[capitalise]{cleveref}
\usepackage{url}
\usepackage{thmtools, thm-restate}
\usepackage{algorithm, algorithmicx, algpseudocode}
\usepackage{tikz}
\usepackage{bm}
\usepackage{thm-restate}
\usepackage{mdframed}
\usepackage{multirow}

\newcommand{\eps}{\varepsilon}

\newcommand{\poly}{\operatorname{\mathrm{poly}}}

\newcommand{\Gg}{\ensuremath{\mathcal G}}

\newcommand{\ceil}[1]{\left\lceil #1 \right\rceil}
\newcommand{\Oh}{\mathcal{O}}
\newcommand{\Ohtilde}{\tilde{\mathcal{O}}}

\newcommand{\mindiam}{\texttt{min-diam}}
\newcommand{\minradius}{\texttt{min-radius}}
\newcommand{\diam}{\texttt{diam}}
\newcommand{\dmin}{d_{\min}}
\newcommand{\mindiamalg}{\operatorname*{\mathbf{MinDiameter}}}
\newcommand{\multimodediamalg}{\operatorname*{\mathbf{MultiModeDiameter}}}
\newcommand{\singletypealg}{\operatorname*{\mathbf{SingleType}}}
\newcommand{\singletypealgmm}{\operatorname*{\mathbf{SingleTypeMM}}}

\newtheorem{theorem}{Theorem}[section]  
\newtheorem{lemma}[theorem]{Lemma}

\newtheorem{claim}[theorem]{Claim}
\newtheorem{cor}[theorem]{Corollary}
\theoremstyle{definition}
\newtheorem{definition}[theorem]{Definition}

\newif\ifcomments
 \commentsfalse  

\title{Improved Approximation of Min-Distances in Near-Linear Time}
\author{Yael Kirkpatrick\thanks{\texttt{yaelkirk@mit.edu}, supported by NSF Grant No 2141064.}\\MIT}

\date{}

\begin{document}
    \maketitle
    \begin{abstract}
        We study the problem of approximating the diameter of directed graphs under the min-distance measure, defined as $d_{\min}(u,v) = \min(d(u,v), d(v,u))$. Unlike standard shortest-path distance, min-distance is not a metric, which renders many classical techniques inapplicable. Prior work has therefore focused on approximating this parameter, culminating in an approximation-runtime tradeoff  by  Dalirrooyfard et al. [ICALP'19] giving a $4k-1$ approximation in $\tilde{O}(mn^{1/(k+1)})$ time for any positive integer $k$ and, more recently, the first near-linear time constant approximation by Chechik and Zhang [FOCS'22], where they obtained a 4-approximation to the min-diameter. 

In this work we present a randomized near-linear time algorithm that achieves a $3$-approximation to the min-diameter, outperforming all known approximation–runtime tradeoffs. Our approach introduces a novel type-classification framework that may be of independent interest.

We further extend our techniques to the more general setting of multimode graphs, recently introduced as a generalization of min-distance by Kirkpatrick and Vassilevska W. [MFCS'25]. For directed $2$-mode graphs, we obtain a $3$-approximation to the diameter in near-linear time, dramatically improving over the previously best known $n$-approximation. Our results significantly narrow the gap between min-distance and multimode distance approximations, and open new directions for understanding graph parameters under non-metric distance measures.

    \end{abstract}
    \section{Introduction}
The graph \emph{diameter}, defined as the maximum shortest-path distance between pairs of vertices, is one of the most central and widely studied graph parameters. The diameter serves as an indicator of graph complexity, and consequently many practical applications involve estimating it \cite{realworlddiam12, empiricalboundsdiam09, smallnetworkdsdiam11, bfsrealgraphsdiameter15}. As computing the exact diameter is hard under fine-grained assumptions \cite{sprasediamradius13}, much work has gone into constructing approximation algorithms \cite{aingworthdiam99, sprasediamradius13, towardsapprox21, betterapproxdiam14, cgr}, studying the hardness of approximating this value \cite{undirdiamlb, towardsapprox21, lidiamapprox21, bonnet21, dirdiamlb21} as well as studying it in specialized settings \cite{chepoi1994linear, realweightedpath05, distributeddiam12, dynamicdiam19}.


In undirected graphs, the natural distance between a pair of points $u,v$ is the length of the shortest path between them, $d(u,v)$. In directed graphs, however, this notion is no longer symmetric and the shortest path (or one-way) distance, from $u$ to $v$, $d(u,v)$, may differ significantly from the distance in the opposite direction, $d(v,u)$. In the context of the graph diameter, this asymmetry can lead to often unsatisfying answers, as the diameter of a graph with more than one strongly connected component is infinite under this one-way notion of distance. This in part has motivated the study of various notions of distance in directed graph, including the roundtrip distance, $d(u,v) + d(v,u)$ \cite{roundtrip99}, the max-distance $d_{\max}(u,v)\coloneqq \max(d(u,v), d(v,u))$ \cite{fixedparam16} and the min-distance $\dmin(u,v)\coloneqq \min(d(u,v), d(v,u))$ \cite{fixedparam16}.

As with the standard definition of distance, computing the diameter under these various notions of distance requires  $\Omega(m^{2-\eps})$ under the Strong Exponential Time Hypothesis (SETH) \cite{fixedparam16, sprasediamradius13} so we resort to search for approximation algorithms instead. Among these notions, the roundtrip distance and the max-distance are metrics, and therefore many algorithms for the traditional diameter extend naturally to these settings as well \cite{towardsapprox21, betterapproxdiam14, sprasediamradius13}. The min-distance, in contrast, is not a metric since it does not satisfy the triangle inequality. This has inspired a long line of work on approximating graph parameters in the min-distance setting, as no existing algorithms extend to this setting directly.

In this paper we study the min-distance measure, focusing in particular on the diameter of a graph under this notion, known as the min-diameter. This distance measure naturally captures real-world scenarios, such as determining the fastest way for a patient to receive care - either by traveling to a hospital or by having a doctor come to them.

In the paper that introduced the concept of min-distance, Abboud, Vassilevska W. and Wang \cite{fixedparam16} showed a simple 2-approximation to the min-diameter in near-linear\footnote{By near-linear time we mean $O(m\poly \log n)$.} time in directed acyclic graphs (DAGs). Follow up work has obtained further results for the DAG setting  \cite{mindiamdag21, bergermindiam2023}.

For general graphs, a linear-time $2$-approximation appears to be far beyond reach. The best known lower bound for the problem was shown by Berger, Kaufmann and Vassilevska Williams \cite{bergermindiam2023}, where they showed any $(2-\varepsilon)$-approximation for min-diameter requires $\Omega(m^{2-o(1)})$ time. Initially, it was unclear whether any nontrivial approximation could be achieved in subquadratic, let alone near-linear time. In 2019, Dalirrooyfard et al.~\cite{mindistance2019} gave the first approximation algorithm for the min-diameter in general graphs, establishing a tradeoff between approximation quality and runtime: for any integer $k \geq 1$, they obtained a $(4k-1)$-approximation in $\Ohtilde(mn^{1/(k+1)})$ time\footnote{We use $\Ohtilde$ to hide polylogarithmic factors in the runtime.}. In particular, their result yields a $\log n$-approximation in near-linear time, while the best approximation factor they achieve is $3$ in $\tilde{O}(m\sqrt{n})$ time. It remained open whether a constant-factor approximation could be achieved in near-linear time. This question was resolved by Chechik and Zhang~\cite{mindiam2022}, who obtained a $4$-approximation to the min-diameter in near-linear time.

Following this work on the min-distance, recently Kirkpatrick and Vassilevska W. \cite{multimode} introduced the notion of multimode graphs, which generalize the min-distance. A $k$-multimode graph $\Gg = (V, E_1, \ldots, E_k)$ is defined as a graph with a single set of vertices and $k$ sets of edges on the same vertex set. A shortest path is defined as the shortest path on \textit{one} of the edges sets, $d_\Gg(u,v) \coloneqq \min_i d_{(V, E_i)}(u,v)$. Note that if we take a directed graph $G = (V,E)$ and denote by $\overleftarrow{E}$ the same edges in the reverse direction, then the 2-mode distance in the 2-multimode graph $\Gg = (V, E, \overleftarrow{E})$ is exactly the min-distance in $G$. 

The more general problem of directed 2-mode distance is strictly harder than the min-distance in some settings, such as approximating the radius, or the smallest eccentricity in the graph. While under fine grained assumptions no subquadratic approximation can exist for the directed 2-mode radius \cite{multimode}, there exist numerous such approximation for the min-radius \cite{fixedparam16,mindistance2019,mindiam2022}. For the case of the diameter, no strong hardness result exists to separate the problems of approximating min-diameter and directed 2-mode diameter. Nonetheless, before this work the best known approximation for 2-mode diameter running in near-linear time obtained only a factor $n$ approximation \cite{multimode}. In this paper we bring this approximation factor down significantly, closing the gap between the two problems.

\subsection{Our Results}
Our results in the context of current best-known algorithms are summarized in \cref{tab:results}. Our main result is an improved approximation to min-diameter, obtaining a near-linear time 3-approximation.

\begin{table}[]
\center
\renewcommand{\arraystretch}{1.3}
\begin{tabular}{|l|c|c|c|}
\hline
\textbf{Problem}                          & \textbf{Runtime}     & \multicolumn{1}{l|}{\textbf{Approximation}} & \textbf{Reference}        \\ \hline
\multirow{3}{*}{min-diameter}             & $\Ohtilde(mn^{1/(k+1)})$ & $4k-1$                              & \cite{mindistance2019} \\ \cline{2-4} 
                                          & $\Ohtilde(m)$        & 4                                           & \cite{mindiam2022}       \\ \cline{2-4} 
                                          & $\Ohtilde(m)$        & 3                                           & This paper                \\ \hline
\multirow{2}{*}{directed 2-mode diameter} & $\Ohtilde(m)$        & $n$                                         & \cite{multimode}         \\ \cline{2-4} 
                                          & $\Ohtilde(m)$        & 3                                           & This paper                \\ \hline
\end{tabular}
\caption{Result Summary.}\label{tab:results}
\end{table}

\begin{theorem}\label{thm:mindiam}
    There is a randomized algorithm that computes a 3-approximation to the Min-Diameter in $\Ohtilde(m)$ time with high probability. 
\end{theorem}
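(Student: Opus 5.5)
We aim for an estimate $\hat D$ with $D/3 \le \hat D \le D$, where $D$ is the min-diameter. Each candidate value $\hat D$ will be certified by an actual pair $u,v$ with $\dmin(u,v)=\hat D$, computed by two single-source shortest path computations, so the upper bound $\hat D \le D$ is automatic. The work is to guarantee $\hat D \ge D/3$. We first reduce to a decision problem. Fix a guess $D$, ranging over powers of two or, after an appropriate rescaling, over $O(\log n)$ candidate values. The algorithm must either output a pair at min-distance more than $D/3$, or correctly conclude that the min-diameter is less than $D$.

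The basic primitive follows the Chechik--Zhang approach. Sample a random set $S$ of $O(\log n)$ vertices and run forward and backward Dijkstra from each sampled vertex. For each vertex $v$ we record its out-ball $B^{out}_r(v)$ and in-ball $B^{in}_r(v)$ at radius $r=D/3$, in sampled form or via size estimates. We then classify each vertex by a \emph{type} determined by these two balls. A vertex is of out-type if its out-ball of radius $D/3$ contains at least $n/2$ vertices, of in-type if its in-ball does, and of small-type otherwise. Sizes are estimated by sampling, and this is the only source of randomness.

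The key step is a case analysis on the types of the unknown diametral pair $a,b$, which satisfies $d(a,b),d(b,a)\ge D$.

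\textbf{Small-type vertices.} If some vertex has both balls small, we look for a far vertex directly. With good probability a random sample point lies outside both of its radius-$D/3$ balls. This gives a pair at min-distance greater than $D/3$, which we detect by running Dijkstra from the sampled point against a few candidate small-type vertices. The small-type vertices themselves are found cheaply: we truncate Dijkstra once more than $n/2$ vertices have been settled, and use a geometric-scan trick to keep the total cost near-linear.

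\textbf{Large-type vertices.} Consider two out-type vertices $x,y$. Their out-balls of radius $D/3$ intersect in some vertex $w$, so $d(x,w),d(y,w)\le D/3$. This by itself gives no bound on $\dmin(x,y)$, since min-distance has no triangle inequality. So we exploit the sampled hub instead. A sampled vertex $s$ lies with constant probability in the large out-ball of $a$, and we consider $\dmin(s,\cdot)$ for all vertices. If every vertex $v$ had $\dmin(s,v)\le D/3$, then combining $d(a,s)\le D/3$ with either $d(s,b)\le D/3$ or $d(b,s)\le D/3$ constrains the pair. The first option would give $d(a,b)\le 2D/3<D$, a contradiction. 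So necessarily $d(b,s)\le D/3$, which places $b$ in the large in-ball of $s$. We then iterate, taking the set $T$ of vertices $v$ with $d(v,s)\le D/3$ and recursing inside this restricted set. This recursion is where the \emph{type-classification} framework is applied: types are defined relative to the current candidate set, and the candidate set shrinks by a constant factor each round, giving $O(\log n)$ levels.

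We expect the main obstacle to be the mixed case, where $a$ is out-type, $b$ is in-type, and every sampled hub lies "between" them in the wrong orientation. Hub arguments in this case lose a factor of 4, which is the source of the 4 in Chechik--Zhang. The first thing we would try is to choose the radius thresholds asymmetrically, so that each type is classified with respect to both $D/3$ and $2D/3$ balls. The 3 would then come from the inequality $D/3 + D/3 < D$ applied along a two-hop path through a single hub. The remaining case to handle is when both $a$ and $b$ have all their large balls pointing the same way. We would resolve it by running a Dijkstra from a sampled vertex restricted to the small-type candidates, and charging the cost to the size of the balls.
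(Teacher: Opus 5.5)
There is a genuine gap. The case you flag as the main obstacle is where the whole difficulty lies, and you leave it as a list of things you would try. Your hub step is sound as far as it goes: for any sampled $h$ with $\dmin(h,a),\dmin(h,b)<D/3$, both endpoints lie in $B^-(h,D/3)$ or both lie in $B^+(h,D/3)$, since otherwise $\dmin(a,b)<2D/3$. But this does not give a near-linear recursion.

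\begin{itemize}
\item You do not know which side holds the pair, so you must recurse on both sides.
\item Neither side need be small. The in-ball of $h$ can contain almost every candidate, so ``shrinks by a constant factor each round'' is unjustified.
\item The two sides can overlap arbitrarily, so the total work is not controlled.
\item You never say in which graph a subproblem computes distances. Using all of $G$ costs $\Omega(m)$ per subproblem. Using an induced subgraph can inflate distances, so a pair that is far apart there need not be far apart in $G$.
\end{itemize}

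Ball-size types (out/in/small) address none of these issues. The small-type step is also wrong as stated: two balls of size $<n/2$ can together cover all but $O(1)$ vertices, so a random point lies outside both only with probability $O(1/n)$. Finally, the factor $3$ does not come from the two-hop inequality $D/3+D/3<D$, which alone would allow $D/2$. It comes from a three-hop inequality, and that is the missing ingredient.

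The paper handles these issues as follows.
\begin{itemize}
\item It splits $C$ by the Dalirrooyfard et al.\ partition $C_x^+,C_x^-$. The two parts are disjoint, and each has size at most $\frac{8}{9}|C|$ for at least half the choices of $x$. Keeping $a,b$ on one side is exactly your hub argument.
\item It carries $L+1\approx\log_{9/8}n$ nested padding layers and loses one per level. Hence every path of length $<D/3$ among relevant vertices survives at every depth, which answers the distance-inflation problem.
\item It recurses only when the overlap $W_x$ of the two outermost padding sets spans at most $m/L$ edges, so the total edge blow-up is $(1+1/L)^L=O(1)$.
\item When the overlap is large, it uses types defined relative to the unknown endpoints: $T_1=\{v: d(a,v),d(b,v)<D/3\}$ and $T_2=\{v: d(v,a),d(v,b)<D/3\}$. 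A hop of length $<D/3$ from a $T_1$-vertex $u$ to a $T_2$-vertex $v$ would give $d(a,b)\le d(a,u)+d(u,v)+d(v,b)<D$. So vertices on a cycle of such hops share one type, provided SSSP from them does not already reveal a far pair.
\item Consequently, the candidates in $C_x^+$ that reach $W_x$ through the padding layers, and those in $C_x^-$ reached from $W_x$, form a set $A$ whose members all have $x$'s type.
\item On such a single-type set, intersecting the in-balls (or out-balls) of $O(\log n)$ random samples shrinks the set by a constant factor while retaining an endpoint; this is the directed-girth lemma. The procedure either finds a far pair or certifies that $A$ contains neither endpoint, in which case $A$ is deleted from $C$.
\item Successive overlaps $W_x$ are disjoint, so at most $L$ retries occur.
\end{itemize}

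Without some replacement for the padding and for this same-type/overlap mechanism, your outline does not yield a near-linear $3$-approximation.
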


This result strictly improves upon the entire tradeoff of Dalirrooyfard et al.~\cite{mindistance2019} for all values of $k$, achieving an equal or better approximation ratio with faster runtime. In addition, it improves upon the $4$-approximation of Chechik and Zhang~\cite{mindiam2022} by reducing the approximation factor to $3$.


Our secondary result is an extension of this theorem to multimode graphs:
\begin{theorem}\label{thm:2modediam}
    There is a randomized algorithm that computes a 3-approximation to the directed $2$-Mode-Diameter in $\Ohtilde(m)$ time with high probability.
\end{theorem}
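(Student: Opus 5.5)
We prove \cref{thm:2modediam} by porting the type-classification framework behind \cref{thm:mindiam} from the pair $(E,\overleftarrow{E})$ to an arbitrary pair of directed edge sets $(E_1,E_2)$. Write $d_i$ for distances in $G_i=(V,E_i)$ and $d_\Gg(u,v)=\min(d_1(u,v),d_2(u,v))$. We may not reverse a mode and relate it to the other one, but we may still run forward and backward Dijkstra/BFS in each $G_i$ separately, and each run costs $\Ohtilde(m)$.

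\textbf{Output and soundness.} The algorithm returns a value $\tilde D$ that is the largest multimode eccentricity $\max_v d_\Gg(s,v)$ over a set $S$ of $\Ohtilde(1)$ computed sources, together with $\lceil D'/3\rceil$-type certificates obtained from ball computations. Every reported quantity is either a true 2-mode distance or is justified by an explicit pair $(u,v)$ with $d_\Gg(u,v)\ge$ the reported value. Hence $\tilde D\le D$ is immediate.

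\textbf{Completeness.} The work is showing $\tilde D\ge D/3$. Fix an optimal pair $(a,b)$ with $d_1(a,b),d_2(a,b)\ge D$, and guess $D$ up to a factor $(1+\eps)$; a doubling search contributes only a log factor. Sample a random set $S$ of $\Ohtilde(1)$ vertices and run BFS from and to each $s\in S$ in both modes. For a vertex $v$, record its \emph{type} as the pattern of which of $d_1(s,v),d_1(v,s),d_2(s,v),d_2(v,s)$ are below $D/3$.

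The key lemma is a case analysis on the types of $a$ and $b$ relative to a pivot $s$:
\begin{itemize}
\item If, in each mode $i$, some sample $s$ has $d_i(s,b)\ge D/3$, or some sample has $d_i(a,s)\ge D/3$, then a combination of these certificates yields a pair at 2-mode distance $\ge D/3$.
\item Otherwise, in each mode there is a sample with $d_i(a,s)<D/3$ and $d_i(s,b)<D/3$. Then $d_i(a,b)<2D/3$, a contradiction.
\end{itemize}

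\textbf{Main obstacle.} The real difficulty is the mixed case, where mode $1$ is certified through one pivot and mode $2$ through a different pivot. A certificate in one mode says nothing about the other mode, so the two certificates do not directly give one pair far apart in both modes. Here I would imitate the Chechik--Zhang style ball-growing: take the set $A_i$ of vertices that reach $s$ within $D/3$ in mode $i$, and grow out-balls in $G_1$ and $G_2$ of radius $D/3$ from vertices near a sampled point. Random sampling (hitting sets of size $\tilde O(\sqrt{\cdot})$ are too costly, so use the "heavy/light" decomposition from the min-diameter proof) guarantees that either a sampled vertex lies in a dense region and gives both-mode far vertices, or the region is small enough to scan in total $\Ohtilde(m)$.

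I expect this both-modes-simultaneously certification to be the hardest step. What I would try first is to prove a lemma that a vertex $x$ with small balls in both modes has both-mode eccentricity at least $D/3$ witnessed by the far endpoint of the sample. The bound then follows with high probability over the sampling.
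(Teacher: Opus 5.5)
Your proposal has a genuine gap: completeness is never proved. Your dichotomy shows that the hypothesis of the first bullet \emph{always} holds. If some mode $i$ had no certificate, every sample $r$ would satisfy $d_i(a,r),d_i(r,b)<D/3$, forcing $d_i(a,b)<2D/3$. So the whole argument rests on the first bullet, which is asserted without proof and is false.

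A pair is certified only if it is far in both modes at once. Certificates in different modes, whether from the same pivot or from different ones, do not combine into such a pair. This already fails in the min-distance special case $E_2=\overleftarrow{E_1}$. Let the only edges be $a\to x$ and $b\to x$ for every $x$ in a large bidirected clique $X$. Then $d_\Gg(a,b)=\infty$, and a random sample $r$ lies in $X$ w.h.p. Such an $r$ has certificates in both modes ($d_1(r,b)=\infty$ and $d_2(a,r)=\infty$), yet $d_\Gg(r,v),d_\Gg(v,r)\le 1$ for every $v$. So nothing computed from $\Ohtilde(1)$ sampled sources certifies a far pair; the endpoints must be \emph{located}.

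Your fallback ideas do not do this. The min-diameter proof in this paper uses no heavy/light decomposition. Your proposed lemma about a vertex with small balls in both modes is trivial: if its two out-balls miss a vertex, it certifies itself. It says nothing about how to find such a vertex, and in the example every vertex of $X$ has large balls.

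The missing idea is to stop certifying the modes separately and merge forward reachability in $G_1$ with backward reachability in $G_2$: write $u\rightsquigarrow v$ if $d_1(u,v)<D/3$ or $d_2(v,u)<D/3$. Type vertices relative to the unknown endpoints, not relative to pivots. Every non-degenerate vertex is then either Type 1 ($d_1(a,v),d_2(v,b)<D/3$) or Type 2 ($d_2(a,v),d_1(v,b)<D/3$). The relation $\rightsquigarrow$ carries Type 1 forward and Type 2 backward, since a violation yields $d_1(a,b)<D$ or $d_2(a,b)<D$. Hence every $\rightsquigarrow$-cycle is single-typed (\cref{clm:typeloopmultimode}).

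On a single-type set containing an endpoint, you remove dense mutual-ball cores and intersect $D/3$-balls of random samples. Via \cref{lm:ddirectedgirth}, each round shrinks the set by a constant factor while retaining the endpoint (\cref{lm:sametypesetmultimode}). This is the procedure designed for exactly the situation in the example above.

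The paper embeds this in the min-diameter recursion:
\begin{itemize}
\item a mode-aware balanced partition $C_x^\pm$ (\cref{def:partioncmm});
\item $L$ padding layers grown with the combined balls $B_1^+\cup B_2^-$ and $B_1^-\cup B_2^+$;
\item removal of the single-type set $A$ whenever the overlap $W_x$ is large.
\end{itemize}
Disjointness of the $W_x$'s bounds the number of retries, and the total edge blow-up stays at $(1+1/L)^L=\Oh(1)$. Your plan contains none of these ingredients or a substitute for them.
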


This result improves upon the best previously known algorithm of Kirkpatrick and Vassilevska W. \cite{multimode}, bringing the approximation ratio down from $n$ to $3$.

To obtain our results we develop a type-classification framework which could be of independent interest to the study of algorithmic questions in the min-distance setting or other non-metric distance measures.

\subsection{Technical Overview}

Given a threshold $D$, our goal is to either find a pair of vertices $(u,v)$ such that $\dmin(u,v) \geq D/3$, or determine that $\mindiam(G)<D$.  By performing a binary search over $D$, we can identify the optimal threshold and thereby obtain a $3$-approximation to the min-diameter.

Starting from the full set of vertices, a natural approach, used in many previous min-diameter approximation algorithms, is to partition the set into two balanced subsets and recurse (e.g. \cite{fixedparam16}). To this end, we sample a vertex $x$ and compute its incoming and outgoing balls of radius $D/3$, namely $C_1 = B^-(x, D/3)$ and $C_2 = B^+(x, D/3)$. If it is the case that $\mindiam(G) \geq D$, then there exists a pair of diameter endpoints $(s,t)$ such that $\dmin(s,t) \geq D$. This pair must lie entirely within either $C_1$ or $C_2$, since otherwise there would exist a path of length less than $2D/3$ between them, contradicting the assumption. Consequently, by recursing on $C_1$ and $C_2$, we are guaranteed that at least one of the two sets contains a pair of vertices at min-distance at least $D$.

This approach appears promising. However, recursing on these sets does not guarantee correctness: if we later find a pair of vertices with min-distance at least $D/3$, it is unclear whether their distance was already large in the original graph or became large due to the removal of some of the vertices along the shortest path between them. To address this issue, we adopt the idea of Chechik and Zhang~\cite{mindiam2022} and introduce \emph{padding} vertices in the recursive calls - vertices that are not considered as potential diameter endpoints but may lie on shortest paths between such pairs.

In our setting, we include $P_1 \coloneqq B^-(C_1, D/3)$ as padding for $C_1$ and $P_2 \coloneqq B^+(C_2, D/3)$ as padding for $C_2$. We then recursively invoke the algorithm on the pairs $(C_1, P_1)$ and $(C_2, P_2)$. 
Computing single-source shortest paths (SSSP) from vertices in $C_1$ using only edges induced by $P_1$ ensures that any path of length less than $D/3$ between two vertices in $C_1$ is preserved. Consequently, if we find a pair $u,v \in C_1$ such that $\dmin^{G[P_1]}(u,v) \geq D/3$, then it follows that $\dmin^{G}(u,v) \geq D/3$. Otherwise, either $d_G(u,v) < D/3$ or $d_G(v,u) < D/3$, and in both cases the corresponding shortest path is fully contained in $P_1$.

Having addressed correctness, we now face a different issue: the recursive sets may overlap, potentially causing a rapid blow-up in the runtime. To address this problem, we introduce a parameter $L \approx \log n$ and only recurse on $(C_1, P_1)$ and $(C_2, P_2)$ when the overlap between $P_1$ and $P_2$ involves fewer than $m/L$ edges. 
We choose $L$ to upper bound the recursion depth, allowing for a multiplicative $(1 + 1/L)$ increase in the number of edges at each level. Over at most $L$ levels of recursion, this results in a total blow-up of at most $(1 + 1/L)^L = \Oh(1)$ in the number of edges.
If the overlap is large, we instead handle the vertices associated with the overlap separately (see below). Since this case arises only when the overlap is large, we can argue that after at most $L$ attempts, we will find a vertex $x$ that yields the desired partition with small overlap.

To handle the large overlap we introduce a new concept - designating vertices into one of two types. We say that a vertex $v$ is \emph{degenerate} if its min-distance to either diameter endpoint is at least $D/3$. In this case, running SSSP from $v$ would already find a pair of points of the desired distance. 
For a non-degenerate vertex $v$, it must hold that either both $d(s,v) < D/3$ and $d(t,v) < D/3$, or both $d(v,s) < D/3$ and $d(v,t) < D/3$. We refer to vertices satisfying the former condition as \emph{Type 1}, denoting their set by $T_1$, and those satisfying the latter as \emph{Type 2}, denoting their set by $T_2$.

Now consider two non-degenerate vertices $u$ and $v$ such that $d(u,v) < D/3$. If $u \in T_1$, then necessarily $v \in T_1$, as otherwise the path $s \rightsquigarrow u \rightsquigarrow v \rightsquigarrow t$ would have length less than $D$, contradicting the choice of $s,t$; see \cref{fig:types}. Similarly, if $v \in T_2$, then $u \in T_2$.
Therefore, if there exists a complete cycle of non-degenerate vertices $u_1, u_2, \ldots, u_k$ such that
\[
d(u_1,u_2), \ldots, d(u_{k-1},u_k), d(u_k,u_1) < D/3,
\]
then all vertices $u_1, \ldots, u_k$ must belong to the same type.

\begin{figure}[ht]
    \centering
    \includegraphics[width=0.3\textwidth]{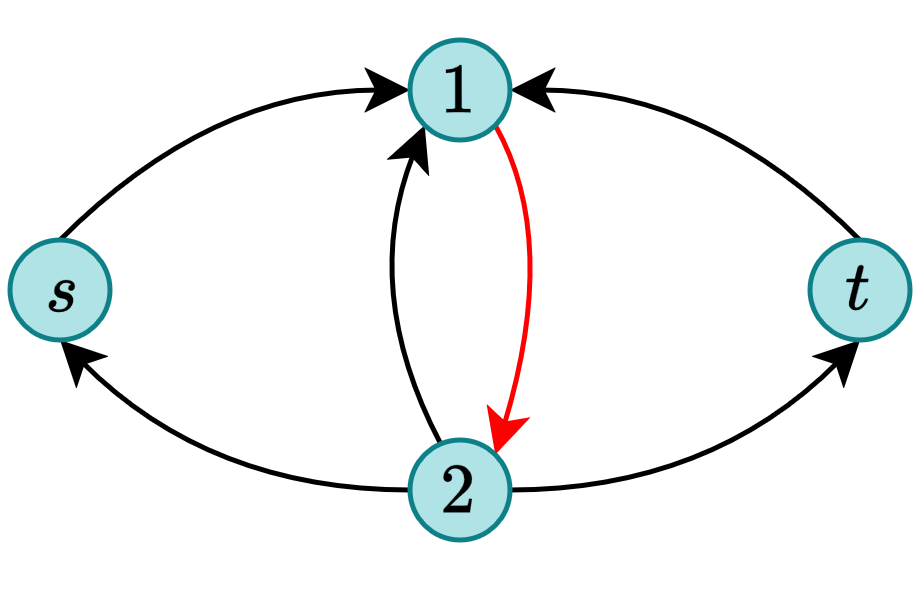}
    \caption{\small Distances between $s,t$ and vertices of type 1 and type 2. Black edges denote paths of length $<D/3$, while the red edge denotes that a path of length $<D/3$ cannot exist.}
    \label{fig:types}
\end{figure}

Thus, suppose we obtain a partition $(C_1, P_1), (C_2, P_2)$ with a large overlap $W_x \coloneqq P_1 \cap P_2$, where $|E[W_x]| > m/L$. Intuitively, this overlap corresponds to a large set of vertices that all share the same type. 
If all vertices in $W_x$ are of Type~1, we repeatedly sample $\Theta(\log n)$ vertices and take the intersection of their incoming $D/3$ neighborhoods, retaining $s$ or $t$ if they were already present. A lemma from prior works~\cite{directedgirth2020, chechik2020girth} ensures that, with high probability, each iteration shrinks the set by a constant factor. Consequently, after $\Theta(\log n)$ iterations, the set becomes sufficiently small, allowing us to identify $s$ or $t$. 
An analogous procedure applies if all vertices are of Type~2, using intersections of outgoing $D/3$ neighborhoods. If no pair of vertices with min-distance at least $D/3$ is found, we conclude that neither $s$ nor $t$ lies in the overlap, and we remove the corresponding vertices from further consideration. 
More precisely, we consider the sets $C_1 \cap B^+(W_x, 2D/3)$ and $C_2 \cap B^-(W_x, 2D/3)$, whose vertices are guaranteed to share the same type as $x$, and remove them from $C$ if neither $s$ nor $t$ is found.

We now repeat the process by sampling a new vertex $x' \in C$ and constructing the corresponding sets $C'_1, C'_2, P'_1, P'_2$. It is again possible that the resulting overlap $W_{x'}$ is large. However, we observe that $W_{x'}$ must be disjoint from $W_x$, since otherwise $x'$ would have been removed from $C$ in the previous iteration. 
This disjointness property ensures that after at most $L$ such attempts, we obtain a vertex $x'$ whose corresponding overlap is small, allowing us to recurse while incurring only a $(1+1/L)$ multiplicative increase in the total number of edges.

The approach outlined above has one remaining issue. While initially we may assume that any vertex from which we run SSSP is non-degenerate, since otherwise we would already be done, this assumption no longer holds after recursion. In particular, the recursion partitions vertices into a set of potential diameter endpoints $C$ and padding vertices $P$. While we are guaranteed that any pair of vertices in $C$ with min-distance at least $D/3$ is also far apart in the original graph, this guarantee does not extend to pairs of vertices in $P$. Consequently, vertices in the overlap $W_x$ may be degenerate, since $W_x$ may consist entirely of padding vertices.

To address this issue, we propagate multiple layers of padding with each recursive call, namely $P_1, \ldots, P_{L+1}$. After the first call, we can ensure that any vertex explored in $C, P_1, \ldots, P_L$ is non-degenerate (otherwise we would have already terminated), since all paths of length less than $D/3$ between such vertices are preserved in the final padding layer $P_{L+1}$. 

Applying the previous arguments, we can therefore identify a vertex $x$ such that the $L$ layers of padding constructed around it have small overlap. Upon recursing, we maintain that vertices in $C$ together with the first $L-1$ padding layers remain non-degenerate, and this invariant continues inductively. Although we lose one layer of padding at each level of recursion, the choice of $L$ as an upper bound on the recursion depth ensures that, even at the final level, all paths of length less than $D/3$ between vertices in $C$ are preserved.

\subsection{Discussion and Open Problems}
In prior work on graph parameters under the min-distance measure, techniques for approximating the min-diameter have often extended to the min-radius and, more generally, to approximating all min-eccentricities. In particular, the previous best min-diameter algorithms admit matching results for min-radius: a $3$-approximation in $\Ohtilde(m\sqrt{n})$ time~\cite{mindistance2019}, and a $4$-approximation in near-linear time~\cite{mindiam2022}. However, it is unclear whether the techniques developed in this work extend to the radius setting.

The key idea in our algorithm - the type-classification framework, admits an analogous formulation for the radius problem. Suppose $\minradius(G)\leq R$, let $c$ be a center, and consider the task of finding a vertex with min-eccentricity at most $3R$, which we call \emph{degenerate}. Every vertex $u$ satisfies either $d(u,c)\leq R$ or $d(c,u)\leq R$; we refer to vertices satisfying the former as Type~1 and the latter as Type~2. The main observation is that if there exists a path of length at most $R$ from a Type~2 vertex to a Type~1 vertex, then \emph{both} vertices are degenerate. This yields structural properties analogous to those used in our min-diameter approximation algorithm.

This suggests that our techniques may extend to the radius setting. However, the key step in our algorithm - using $L$ layers of padding to ensure that degenerate vertices in the padding yield a valid solution, does not appear to generalize.

On the other hand, our techniques do extend from the min-distance setting to the multimode setting in the case of diameter. A similar extension for radius is unlikely, as any subquadratic approximation of the directed $2$-mode radius is hard under fine-grained assumptions~\cite{multimode}. Thus, obtaining a near-linear-time $3$-approximation for the min-radius, and more generally improving approximations for min-eccentricities, remains an interesting open problem.

\subsection{Organization}
In \cref{sec:prelim} we introduce the necessary definitions and preliminary lemmas. In \cref{sec:type} we develop a type-classification framework, which we then use in \cref{sec:mindiam} to prove \cref{thm:mindiam}. Finally, in \cref{sec:multimode} we extend this result to the multimode setting.
    \section{Preliminaries}\label{sec:prelim}
Let $G = (V,E)$ be a graph with $n = |V|$ vertices and $m = |E|$ edges. We will use $m$ throughout the paper to refer to the number of edges in the current instance of the graph we are working with (which might be significantly smaller than the graph we started out with). When referring to the number of edges in some former graph we use $|E|$ to avoid ambiguity. 

Denote by $d_G(u,v)$ the length of the shortest path from $u$ to $v$ in $G$. We omit $G$ from the notation when it is clear from context. Given a vertex $u$ and value $r$ denote by $B^+(u,r),B^-(u,r)$ the outgoing and incoming balls of radius $r$ around $u$, i.e. $B^+(u,r) \coloneqq \{w:d(u,w) < r\}, B^-(u,r) \coloneqq \{w:d(w,u) < r\}$.

Given a subset of vertices $U\subseteq V$ define $E[U]$ to be the edges in $E$ with both endpoint in $U$ and $G[U] \coloneqq (U, E[U])$.

Denote by $[k]$ the set $\{1,\ldots, k\}$. For ease of notation, when referring to cycles $v_1, \ldots, v_k$ we abuse the notation of ${\text{mod }k}$ and say something is true for all pairs $(v_i, v_{i+1_{\text{mod }k}})$ to mean the pairs $(v_1, v_2) , \ldots, (v_{k-1}, v_k),(v_k, v_1)$.

Next, we define the two non-standard notions of distance we use throughout the paper. The min-distance between a pair of points in a directed graph $G$ is defined as the minimum of the two directed distances between them, $\dmin^G(u,v) \coloneqq \min(d_G(u,v), d_G(v,u))$. Again, when $G$ is clear from context we omit it. We define the min-diameter of $G$, or $\mindiam(G)$ to be the largest min-distance between a pair of points in $G$.





A $k$-multimode graph $\Gg = (V, E_1, \ldots, E_k)$ is defined by a set of vertices $V$ and $k$ sets of edges, $E_i \subset V\times V$ \cite{multimode}. Denote by $G_i = (V, E_i)$ and $d_i(u,v) = d_{G_i}(u,v)$. The $k$-mode distance is defined as $d_\Gg(u,v) = \min_i d_i(u,v)$. The $k$-mode diameter of $\Gg$ is defined as $\max_{u,v\in V}d_\Gg(u,v)$. Given a subset $U\subseteq V$ define $\Gg[U] = (U, E_1[U], \ldots, U_k[u])$.



Lastly, we include some definitions and lemmas from prior work which we will use in our algorithms. First we define a partition of a set $C$, introduced under different notation by Dalirrooyfard et al. \cite{mindistance2019}.
\begin{definition}\label{def:partitionc}
    Let $C\subseteq V$ and $x\in C$. Assume all vertices are assigned a global, unique ID. Define $C^+_x$ to be the vertices such that their distance from $x$ is smaller than their distance to $x$, breaking ties by ID: $C^+_x \coloneqq \{v\in C : d(x,v) < d(v,x)\lor [d(x,v) = d(v,x)\land ID(x) < ID(v)]\}$. Define $C^-_x\coloneqq C \setminus (C^+_x \cup \{x\})$.
\end{definition}

The use of ID in this definition is entirely arbitrary, as when a vertex is of equal distances to $x$ in both directions we don't care which set it falls into. We only need this choice to be consistent, i.e. if $x$ falls into $C_y^+$ then $y$ falls into $C_x^-$.

The following is equivalent to Lemma 3.1 in \cite{mindistance2019}
\begin{lemma}[\cite{mindistance2019}]\label{lm:balancedpartition}
    For any $C\subseteq V$ there are $\geq \frac{|C|}{2}$ vertices $x\in C$ such that $|C^+_x|\leq \frac{8}{9}|C|$ and $|C^-_x| \leq \frac{8}{9}|C|$.
\end{lemma}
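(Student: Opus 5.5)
We prove \cref{lm:balancedpartition} by counting over ordered pairs, using a tournament-style argument. Define a directed ``orientation'' relation on $C$: for distinct $x,y\in C$ write $x\to y$ if $y\in C^+_x$. By the tie-breaking rule in \cref{def:partitionc}, for each unordered pair $\{x,y\}$ exactly one of $x\to y$ and $y\to x$ holds: $y\in C^+_x$ if and only if $x\in C^-_y$. So this relation is a tournament $T$ on $C$. For each $x$, $|C^+_x|$ is the out-degree of $x$ in $T$ and $|C^-_x|$ is its in-degree, and these two numbers sum to $|C|-1$.

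Call $x$ \emph{out-heavy} if $|C^+_x|>\frac{8}{9}|C|$ and \emph{in-heavy} if $|C^-_x|>\frac89|C|$. We bound the number of out-heavy vertices. Let $H$ be the set of out-heavy vertices and $h=|H|$. The tournament induced on $H$ has $\binom h2$ arcs, so some vertex of $H$ has in-degree at least $(h-1)/2$ inside $H$. Its out-degree is then at most $|C|-1-(h-1)/2$. This quantity must exceed $\frac89|C|$, which forces $h<\frac{2}{9}|C|+1$.

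A cleaner version of the same bound sums out-degrees over $H$. That sum is at least $h\cdot\frac89|C|$. It is also at most $\binom h2+h(|C|-h)$, since arcs inside $H$ are counted once and each vertex of $H$ has at most $|C|-h$ out-arcs leaving $H$. Comparing the two gives $\frac89|C|<\frac{h-1}2+|C|-h$, so $h<\frac{2}{9}|C|$ (up to the $-1$ slack). The symmetric argument with in-degrees shows there are fewer than roughly $\frac29|C|$ in-heavy vertices.

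Hence at most about $\frac49|C|\le\frac12|C|$ vertices are heavy in either direction, and the remaining at least $\frac{|C|}2$ vertices satisfy both bounds. The only real obstacle is the rounding slack from the $-1$ terms. For small $|C|$ the constant $\frac89$ is generous enough that the inequality $\frac49|C|+O(1)\le\frac12|C|$ can be checked directly, or the count can be handled more carefully. For example, for $|C|\le 9$ the condition $|C^\pm_x|\le\frac89|C|$ fails only when $|C^\pm_x|=|C|-1$, and at most one vertex can have out-degree $|C|-1$ and at most one can have in-degree $|C|-1$, which leaves $|C|-2\ge|C|/2$ good vertices for $|C|\ge4$. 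The cases $|C|\le3$ are trivial.
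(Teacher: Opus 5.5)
Your proposal is correct and follows essentially the paper's argument. The paper's proof (given for \cref{lm:balancedpartitionmm} and said to follow the original) works with the skew-symmetric $\pm1$ matrix induced by \cref{def:partitionc}. It double-counts the out-arcs of the heavy vertices, bounding those inside the heavy set by half of its pairs and those leaving it by $h(|C|-h)$, exactly as your tournament count does; it only argues by contradiction with a heavy set of size $k/4$ instead of solving for $h$.

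Also, the $-1$ slack works in your favor. Your displayed inequality $\frac89|C|<\frac{h-1}{2}+|C|-h$ simplifies to $h<\frac29|C|-1$. So fewer than $\frac49|C|$ vertices are heavy in total, and the small-case discussion is unnecessary.
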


Next, we introduce a lemma used in prior work on directed girth approximation. This lemma, taken from \cite{directedgirth2020}, is attributed to \cite{chechik2020girth}.
\begin{lemma}[Lemma 8 in \cite{directedgirth2020}]\label{lm:ddirectedgirth}
    Let $G = (V,E)$ be a directed graph with $n$ nodes and integer edge weights in $[M]$. Let $S\subseteq V$ with $|S| > c \log n$ (for $c \geq 100/\log (10/9)$) and let $p> 0$. Let $R$ be a random sample of $c\log n$ nodes of $S$ and define 
    \[S' \coloneqq \{a\in S : d(a,r) < p ~\forall r\in R\} = S \cap \bigcap _{r\in R}B^-(r, p).\] Suppose that for every $a\in S$ there are at most $0.2 |S|$ nodes $b\in S$ so that $d(a,b), d(b,a) < p$, then $|S'| \leq 0.8 |S|$ with probability $>1 - \frac{1}{n^{99}}$.
\end{lemma}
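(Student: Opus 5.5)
The plan is to show that, with high probability, every vertex surviving into $S'$ lies in a small deterministic set $A$. Each $a \in S$ survives only if all samples in $R$ land in its out-neighborhood within $S$. So I would split $S$ by the size of that neighborhood: vertices with a small one are killed by the sampling with high probability, and a counting argument using the hypothesis shows that few vertices have a large one.

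For $a\in S$ let $N^+(a)\coloneqq\{b\in S: d(a,b)<p\}$; note $a\in N^+(a)$. By definition $a\in S'$ if and only if $R\subseteq N^+(a)$. Define $A\coloneqq\{a\in S : |N^+(a)|\geq 0.9|S|\}$.

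\textbf{Step 1: vertices outside $A$ are removed with high probability.} Treat $R$ as $c\log n$ independent uniform samples from $S$; sampling without replacement only lowers the survival probability. For $a\notin A$,
\[
\Pr[a\in S'] \le 0.9^{c\log n} = n^{-c\log(10/9)} \le n^{-100},
\]
using $c\ge 100/\log(10/9)$ with logarithms in a consistent base. A union bound over the at most $n$ vertices of $S$ shows that, with probability $>1-n^{-99}$, we have $S'\subseteq A$.

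\textbf{Step 2: $A$ is small (deterministic).} Count the ordered pairs $(a,b)\in A\times A$ with $a\neq b$ and $d(a,b)<p$.
\begin{itemize}
\item \emph{Lower bound.} Each $a\in A$ has at least $0.9|S|-(|S|-|A|)-1 = |A|-0.1|S|-1$ such $b$ inside $A$. The total is therefore at least $|A|(|A|-0.1|S|-1)$.
\item \emph{Upper bound.} Each unordered pair $\{a,b\}$ contributes at most $1$, unless it is mutual, i.e. both $d(a,b)<p$ and $d(b,a)<p$. By hypothesis each $a$ lies in at most $0.2|S|$ mutual pairs, so there are at most $0.1|A||S|$ mutual unordered pairs. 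The total is therefore at most $\binom{|A|}{2}+0.1|A||S|$.
\end{itemize}
Comparing the two bounds and dividing by $|A|$ gives
\[
|A|-0.1|S|-1 \le \tfrac{|A|-1}{2}+0.1|S|,
\]
hence $|A|\le 0.4|S|+1$. Since $|S|>c\log n$ is large, this is at most $0.8|S|$.

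Combining the two steps, $|S'|\le|A|\le 0.8|S|$ with probability $>1-n^{-99}$.

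The main obstacle is Step 2. The hypothesis only bounds \emph{mutual} closeness, so one must exploit the asymmetry: large out-neighborhoods force many one-directional pairs inside $A$, but each unordered pair can be counted only once unless it is mutual. The remaining work is checking the constants $0.9$ and $0.2$ against the target $0.8$, which the computation above handles with room to spare.
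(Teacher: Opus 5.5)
Your proof is correct and follows essentially the same route as the paper: a union bound shows that, with high probability, every vertex of $S'$ has at least $0.9|S|$ close out-neighbours in $S$, and then a double count of close ordered pairs (a non-mutual unordered pair contributes at most once) is played against the hypothesis on mutual pairs. The only difference is that you count pairs inside the deterministic set $A$ rather than in $S'\times S$; this gives you the direct, sharper bound $|A|\le 0.4|S|+1$, whereas the paper argues by contradiction using $0.72|S|^2-0.5|S|^2>0.2|S|^2$.
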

Note that by flipping the direction of all edges we get the same lemma for the symmetric case when $S'$ is defined by $S\cap \bigcap_{r\in R}B^+(r,p)$.

In the original statement of this lemma the condition on $S$ is that for every $a\in S$ there are at most $0.2 |S|$ nodes $b \in V$ (not $S$) such that $d(a,b),d(b,a) \leq p$ (and not $<$), their statement also required $p$ to be an integer. However, this original statement would be insufficient for our purposes. The original proof of \cite{directedgirth2020} still works in our more restrictive setting with slight adjustments, we include it here for completeness.

\begin{proof}[Proof of \cref{lm:ddirectedgirth}]
    We prove the lemma by counting the number of ordered pairs $a,b\in S$ for which $d(a,b),d(b,a) < p$. Since for every $a\in S$ the number of $b\in S$ such that $d(a,b),d(b,a) < p$ is  $\leq 0.2 |S|$, the number of such ordered pairs is $\leq 0.2|S|^2$.

    On the other hand, if $|S'|>0.8|S|$ we will show that the number of such pairs is large with high probability. Consider any $a\in S$ for which there are $\geq 0.1|S|$ nodes $b\in S$ such that $d(a,b) \geq p$, then the probability of a randomly sampled vertex $r\in S$ to have $d(a,r) < p$ is $\leq 0.9$. The probability that $a\in S'$, i.e. $d(a,r) < p$ for all $r\in R$, is hence at most $0.9^{c\log n} \leq 1/n^{100}$. Via a union bound, with probability $\geq 1- 1/n^{99}$, no such vertex is in $S'$, so every $a\in S'$ has $\geq 0.9 |S|$ vertices $b\in S$ such that $d(a,b) < p$.

    Therefore, if $|S'|>0.8|S|$, with high probability there are at least $0.8 |S|\cdot 0.9|S| = 0.72|S|^2$ ordered pairs $(a,b)\in S\times S$ with $d(a,b) < p$. There are at most ${|S|\choose 2}\leq 0.5 |S|^2$ ordered pairs $(a,b)\in S\times S$ such that exactly one of $\{d(a,b) < p, d(b,a) < p\}$ holds. So, with high probability, there are $\geq 0.22 |S|^2 > 0.2|S|$ ordered pairs $(a,b)\in S\times S$ with $d(a,b) < p, d(b,a) < p$, contradiction.
\end{proof}

\section{The Type Framework}\label{sec:type}
Before constructing our min-diameter approximation algorithm we begin by creating a type-classification framework. We define two different types that points can be, a way to find points of the same type and a useful property that arises from large sets of points of a single type.

Given a threshold $D$, we would like to find a pair of points with $\dmin(u,v)\geq D/3$ or determine that $\mindiam(G)<D$. By binary searching over $D$ we find the best choice of $D$ and obtain a $3$-approximation to the min-diameter.

Assume we are in the case that $\mindiam(G) \geq D$, we would like to find a pair of points of min-distance greater than $D/3$ in some subset $U$. Let $s,t$ be a pair of diameter endpoints, $\dmin(s,t) \geq D$ and define the \emph{type} of a vertex according to $s$ and $t$. For the sake of readability we will not index our notations by $s,t$ and $D$ but it is important to emphasize that \textit{all} the definitions in this section are with respect to a fixed (unknown) pair of diameter endpoints and a given threshold $D$.

\begin{definition}
    Given a threshold $D$ and two vertices $s,t$ such that $\dmin(s,t) \geq D$, we say that a vertex is \emph{Type 1} if $d(s,v) < D/3$ and $d(t,v) < D/3$. We denote the vertices of Type 1 by $T_1$. Similarly, we say that a vertex is \emph{Type 2} if $d(v,s) < D/3$ and $d(v,t) < D/3$ and denote the set of  vertices of Type 2 by $T_2$.
\end{definition}

Note that if a vertex $v$ is in neither $T_1$ or $T_2$  then computing SSSP from this vertex will find a min-distance $\geq D/3$ to either $s$ or $t$, concluding our algorithm. We will call such points \emph{degenerate}.

\begin{definition}
    Given a graph $G = (V,E)$ and a subset $U\subseteq V$, we say a vertex $v$ is \emph{$U$-degenerate} in $G$ if $\max_{u\in U}\dmin^G(u,v) \geq D/3$.
\end{definition}

\begin{claim}
    Any $v\notin T_1 \cup T_2$ is $U$-degenerate for any subset such that $s,t\in U$.
\end{claim}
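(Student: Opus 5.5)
The claim follows by negating the two type conditions and checking directly against the definition of $U$-degeneracy. Fix $U$ with $s,t\in U$ and a vertex $v\notin T_1\cup T_2$. It suffices to show that $\dmin(v,s)\geq D/3$ or $\dmin(v,t)\geq D/3$, because then $\max_{u\in U}\dmin(u,v)\geq D/3$.

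Since $v\notin T_1$, at least one of $d(s,v)\geq D/3$ or $d(t,v)\geq D/3$ holds. Since $v\notin T_2$, at least one of $d(v,s)\geq D/3$ or $d(v,t)\geq D/3$ holds. This gives four cases.

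\textbf{Same endpoint in both conditions.} Suppose $d(s,v)\geq D/3$ and $d(v,s)\geq D/3$. Then $\dmin(v,s)\geq D/3$ immediately. The case with $t$ in both conditions is symmetric.

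\textbf{Mixed endpoints.} Suppose $d(s,v)\geq D/3$ and $d(v,t)\geq D/3$, but the first case does not apply. Then $d(v,s)<D/3$ and $d(t,v)<D/3$. Concatenating paths gives
\[
d(t,s)\le d(t,v)+d(v,s)<2D/3<D,
\]
which contradicts $\dmin(s,t)\geq D$. So some same-endpoint case must in fact hold. The other mixed case, $d(t,v)\geq D/3$ and $d(v,s)\geq D/3$, is handled the same way: otherwise $d(s,v),d(v,t)<D/3$, so $d(s,t)<2D/3$, again a contradiction.

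There is no real obstacle beyond organizing this case split; the only point needing care is using the triangle inequality for the directed distance $d$, not for $\dmin$. Since the definition of $U$-degeneracy is with respect to $G$, all these distances are taken in $G$.
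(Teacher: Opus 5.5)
Your proof is correct and is essentially the paper's argument. The paper assumes $v$ is non-degenerate, takes w.l.o.g.\ $d(s,v)<D/3$, uses $v\notin T_1$ to force $d(v,t)<D/3$, and derives $d(s,t)<2D/3$, which is your mixed-endpoint case run as a contradiction rather than a direct case split; you also spell out the symmetric case (yielding $d(t,s)<2D/3$) that the paper's ``w.l.o.g.'' leaves implicit.
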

\begin{proof}
    Assume towards contradiction a vertex $v\notin T_1\cup T_2$ is non-degenerate. Then $\dmin(s,v) < D/3$ and so $d(s,v) < D/3$ or $d(v,s) < D/3$. Assume w.l.o.g $d(s,v) < D/3$. Similarly, $\dmin(t,v) < D/3$ so $d(v,t) < D/3$ or $d(t,v) < D/3$. Since $v\notin T_1$ we have that $d(t,v) \geq D/3$ so we must have $d(v,t) <D/3$. But now, by the triangle inequality, $d(s,t) \leq d(s,v ) + d(v,t) < 2D/3$, contradiction.
\end{proof}

Clearly, running SSSP from a degenerate point will result in finding the desired approximation. 
We therefore assume for the rest of the algorithm that unless stated otherwise, no point from which we run SSSP is degenerate, as otherwise we would be done. 

Without knowing $s$ and $t$, we cannot classify the type of vertices in the graph. However, using the following claim we can identify points that share the same type as one another.

\begin{claim}\label{clm:typeloop}
    Let $v_1, v_2, \ldots, v_k$ be non-$U$-degenerate vertices for some subset $U$ containing $s$ and $t$ and assume that $d(v_i, v_{i+1_{\text{mod }k}}) < D/3$ for $i\in [k]$. Then $v_1, \ldots, v_k \in T_1$ or $v_1, \ldots, v_k \in T_2$.
\end{claim}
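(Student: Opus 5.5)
The plan is to reduce the claim to a single propagation step along one edge of the cycle, and then close the cycle. Since each $v_i$ is non-$U$-degenerate and $s,t\in U$, we have $\dmin(v_i,s)<D/3$ and $\dmin(v_i,t)<D/3$. Arguing exactly as in the preceding claim, this gives $v_i\in T_1\cup T_2$ for every $i$. Note that $T_1\cap T_2=\emptyset$: if $v$ were in both, then $d(s,v)+d(v,t)<2D/3<D$, contradicting $\dmin(s,t)\geq D$. So each $v_i$ has a well-defined type.

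\textbf{Key step: propagation of Type~1 forward.} If $u\in T_1$, $v\in T_1\cup T_2$ and $d(u,v)<D/3$, then $v\in T_1$. Suppose instead that $v\in T_2$. Then
\[
d(s,t)\leq d(s,u)+d(u,v)+d(v,t)<D/3+D/3+D/3=D,
\]
which contradicts $\dmin(s,t)\geq D$.

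\textbf{Symmetric step: propagation of Type~2 backward.} If $v\in T_2$, $u\in T_1\cup T_2$ and $d(u,v)<D/3$, then $u\in T_2$. This is just the contrapositive of the key step, since the types are disjoint and exhaustive on these vertices.

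\textbf{Closing the cycle.} Split into two cases according to whether some $v_j$ is in $T_1$.
\begin{itemize}
\item If some $v_j\in T_1$, apply the key step repeatedly along the edges $(v_j,v_{j+1}),(v_{j+1},v_{j+2}),\ldots$, going around the cycle using the indexing mod $k$. This shows every $v_i$ is in $T_1$.
\item Otherwise, all $v_i$ are in $T_2$, since each lies in $T_1\cup T_2$.
\end{itemize}
In either case all the $v_i$ share a type, as claimed. Only the first case actually needs the cycle structure.

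There is no real obstacle here. The one point needing care is the first step: confirming that non-$U$-degeneracy, rather than plain non-degeneracy, still forces $v_i\in T_1\cup T_2$. This works because the definition of $U$-degenerate takes a maximum over $U$, which includes $s$ and $t$. Distances throughout are taken in $G$, the same graph in which $s,t$ satisfy $\dmin(s,t)\ge D$.
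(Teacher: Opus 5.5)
Your proof is correct and takes essentially the same route as the paper: the triangle inequality $d(s,t)\le d(s,u)+d(u,v)+d(v,t)<D$ rules out a Type~1 vertex reaching a Type~2 vertex within $D/3$, and this step is then propagated around the cycle. The only difference is cosmetic. The paper splits on the type of $v_1$ and propagates Type~2 backward in the second case, whereas your split on whether any $v_j\in T_1$ makes the Type~2 case immediate; you also spell out the membership $v_i\in T_1\cup T_2$, which the paper uses implicitly.
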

\begin{proof}
    Let us first assume $v_1\in T_1$. If $v_2\in T_2$, then we have $d(s,v_1), d(v_1, v_2), d(v_2, t) < D/3$. By the triangle inequality we now have that $d(s,t) \leq d(s,v_1) + d(v_1, v_2) +d(v_2, t) < D$, in contradiction to $\dmin(s,t) \geq D$. Thus $v_2\in T_1$. In the same way, this implies that $v_3\in T_1$ and inductively we have that $v_1, \ldots, v_k \in T_1$.

    On the other hand, if $v_1 \in T_2$ then we arrive at the same contradiction if $v_k\in T_1$. Thus $v_1\in T_2$ implies $v_k\in T_2$ which implies $v_{k-1}\in T_2$ and inductively we have that $v_1,v_2, \ldots, v_k \in T_2$.
\end{proof}

In the next lemma we see the power of having a set of points of the same type, even if we don't know which of the two types it is. More specifically, if a set contains a diameter endpoint and we have the guarantee that all points in it are either degenerate or of the same type, then we can find our desired $3$-approximation in linear time.

\begin{lemma}\label{lm:sametypeset}
    Let $G=(V,E)$ and let $s,t\in U\subseteq V$ and $S\subseteq U$ be a set of vertices such that $\{s,t\}\cap S \neq \emptyset$ and either $S\cap T_1 = \emptyset$ or $S\cap T_2 = \emptyset$. Let $N\geq |U|$\footnote{Note that $N$ can be larger than $|V|$ here. We will need this when we run this algorithm on small subgraphs but still want to retain high probability guarantees with respect to the size $n$ of the original graph.}, $m=|E|$, then in $\Oh(m\cdot \text{\emph{polylog}}~N)$ time we can find a pair of points in $U$ of min-distance $\geq D/3$ with probability $\geq 1-1/N^3$.
\end{lemma}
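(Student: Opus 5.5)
We prove \cref{lm:sametypeset} with an iterative shrinking procedure built on \cref{lm:ddirectedgirth}, run once for each of the two possible types; the correct guess keeps $s$ or $t$ alive.

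\textbf{Key guarantee for the Type 1 run.} Suppose $S\cap T_2=\emptyset$, so every vertex of $S$ is Type 1 or degenerate. Throughout the run we run SSSP (forward and backward, in $G$) from every vertex we touch. If any of them is $U$-degenerate, we read off a pair in $U$ of min-distance $\geq D/3$ and stop. So we may assume every touched vertex is non-degenerate and, by the preceding claim, lies in $T_1\cup T_2$. Intersected with $S$, this means it lies in $T_1$.

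Now let $r\in S$ be a non-degenerate sampled vertex, so $r\in T_1$, i.e.\ $d(s,r),d(t,r)<D/3$. Hence $s,t\in B^-(r,D/3)$. Consequently, intersecting with incoming balls $B^-(r,D/3)$ of non-degenerate sampled $r\in S$ never removes $s$ or $t$. Symmetrically, in the case $S\cap T_1=\emptyset$ we use outgoing balls $B^+(r,D/3)$. Since we do not know which case holds, we run both versions and take any pair found.

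\textbf{Procedure for the Type 1 version.} Set $S_0=S$ and repeat the following while $|S_i|>c\log N$.

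First, test the hypothesis of \cref{lm:ddirectedgirth}. Sample $\Theta(\log N)$ vertices $a\in S_i$ and compute $|\{b\in S_i: d(a,b),d(b,a)<D/3\}|$ for each by two SSSPs.

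If some sampled $a$ has more than (roughly) $0.2|S_i|$ such $b$, those $b$ satisfy $d(a,b)<D/3$, so they share $a$'s type. A cleaner alternative is to handle this case directly: the mutually-close neighbours $b$ of $a$ all have $\dmin(a,b)<D/3$ in both directions. In that case, remove $a$'s mutual ball $M_a=B^+(a,D/3)\cap B^-(a,D/3)\cap S_i$, after first checking whether it contains $s$ or $t$. Doing that check is exactly the hard part (see below), so I would instead rely on the version of the argument that counts pairs. If no sampled $a$ is heavy, then with high probability few vertices are heavy. We apply the lemma to get $S_{i+1}=S_i\cap\bigcap_{r\in R}B^-(r,D/3)$ with $|S_{i+1}|\le 0.8|S_i|$, while $s,t$ are retained by the key guarantee.

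After $O(\log N)$ rounds, $|S_i|=O(\log N)$. We then run SSSP from all remaining vertices. The remaining set still contains whichever of $s,t$ was in $S$, and SSSP from $s$ (or $t$) finds the other endpoint at min-distance $\geq D\geq D/3$.

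\textbf{Running time and probability.} Each round costs $O(\log N)$ SSSPs, i.e.\ $O(m\log N)$ up to the Dijkstra log factor, and there are $O(\log N)$ rounds, for $\Oh(m\,\mathrm{polylog}\,N)$ in total. \Cref{lm:ddirectedgirth} is stated with $n$; we apply it with $N$ in place of $n$, since $|S|\le|U|\le N$ and $c\log N$ samples give failure probability $N^{-99}$ per round. A union bound over rounds and over both runs gives success probability $\geq 1-1/N^3$.

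\textbf{Main obstacle: the heavy vertices.} The hard step is handling vertices $a$ with more than $0.2|S_i|$ mutually close partners, where \cref{lm:ddirectedgirth} does not apply. My plan is as follows.

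If a sampled non-degenerate $a$ is heavy, consider its mutual-neighbour set $M_a$. Suppose $s\in M_a$. Then $d(a,s)<D/3$ and $d(s,a)<D/3$. Moreover $a\in T_1$ gives $d(t,a)<D/3$, so $d(t,s)\le d(t,a)+d(a,s)<2D/3$, contradicting $\dmin(s,t)\ge D$. Hence $s\notin M_a$, and likewise $t\notin M_a$, since $d(t,a)$ and $d(a,s)$ are both $<D/3$ in either labelling.

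So we may safely delete $M_a$ from $S_i$, which shrinks $S_i$ by a $0.8$ factor. Thus each round either deletes a heavy mutual ball or applies the lemma. In both cases $|S_i|$ shrinks by a constant factor while keeping $s$ or $t$, which completes the plan.
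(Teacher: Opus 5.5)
Your overall plan matches the paper's: you run both type guesses, intersect incoming balls of non-degenerate samples, and observe that $s,t\notin M_a$ whenever a sampled $a\in S$ is non-degenerate. However, the step where you invoke \cref{lm:ddirectedgirth} is not justified. That lemma requires \emph{every} $a\in S_i$ to have at most $0.2|S_i|$ mutually close partners. Probing $\Theta(\log N)$ random vertices and finding none of them heavy only shows, with high probability, that heavy vertices are a small fraction of $S_i$. A single unprobed heavy vertex already violates the hypothesis. Your scheme never removes such a vertex, because you delete $M_a$ only when the probe $a$ is itself heavy. So the round in which you "apply the lemma" rests on an unverified hypothesis.

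The missing idea is the symmetry $z\in W(a)\iff a\in W(z)$, where $W(a)=B^+(a,D/3)\cap B^-(a,D/3)\cap S$. The paper samples $Z\subseteq S$ of size $5\log_{10/9}N$ and deletes $\bigcup_{z\in Z}W(z)$ for \emph{every} sampled $z$, heavy or not. Any $a$ with $|W(a)|\ge 0.1|S|$ is hit by some $z\in Z$ with probability $\ge 1-N^{-5}$. Then $a\in W(z)$, so $a$ is deleted. Your argument that $s,t\notin M_z$ for non-degenerate $z$ is exactly what makes this deletion safe. Afterwards there are two cases. If $|\hat S|<|S|/2$, enough progress has already been made. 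Otherwise every survivor has $|W(a)|<0.1|S|\le 0.2|\hat S|$, and the lemma applies verbatim.

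Alternatively, you could keep your probing scheme and re-prove \cref{lm:ddirectedgirth} under the weaker hypothesis that at most a $\delta$-fraction of $S$ is heavy. Each heavy vertex contributes at most $|S|$ mutually close ordered pairs, so the pair-counting argument still yields a contradiction for $\delta<0.02$. Taking $\Theta(\log N)$ probes with a suitable constant certifies $\delta<0.02$ with high probability. Either fix closes the gap; as written, the proof is incomplete at this step.
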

We note that the above algorithm runs in $\tilde{O}(m)$ time regardless of whether the conditions are met. If the conditions are not met it is not guaranteed to succeed.

\begin{proof}[Proof of \cref{lm:sametypeset}]
We will construct the algorithm for the case when $S \cap T_2 = \emptyset$, as the other is symmetric and we can perform the algorithm for both without knowing which case we are in. Further assume w.l.o.g that $s\in S$.

As our base case, if $|S| \leq c\log N$, for the constant $c$ from \cref{lm:ddirectedgirth}, then run SSSP from every node in $S$. This guarantees that we will find a pair of points in $U$ of min-distance $\geq D/3$.

Note that for any vertex $x\in S$ either $x$ is $U$-degenerate, in which case we can run SSSP from $x$ and find a point $y\in U$ such that $\dmin(x,y) \geq D/3$, or $s\in B^-(x, D/3)$ since $x$ is Type 1. We would therefore like to use \cref{lm:ddirectedgirth}, taking intersection of the incoming balls of sampled points until we reduce the size of $S$ enough. To do so, we need to make sure $S$ satisfies the conditions of the lemma, specifically that no $a\in S$ has 
$|W(a)\coloneqq B^+(a, D/3)\cap B^-(a, D/3)\cap S|\geq 0.2|S|$.

Let $Z$ be a random sample of size $ 5\log_{10/9} N$ vertices in $S$. For any $a\in S$ such that $|W(a)|\geq 0.1|S|$, each vertex in $S$ hits $W(a)$ with probability $\geq 0.1$, so with probability $\geq 1 - (0.9)^{5\log_{10/9} N} = 1 - 1/N^5$ some vertex in $Z$ hit $W(a)$. With probability $\geq 1- 1/N^4$, $Z$ hits $W(a)$ for all $a\in S$ with $|W(a)|\geq 0.1|S|$.

Run SSSP from every vertex in $Z$. If any vertex $z\in Z$ finds a vertex $u\in U$ such that $\dmin(z,u)\geq D/3$, return the pair $(u,z)$. Otherwise, let $\hat{S}$ be the set $S$ after removing $\cup_{z\in Z}W(z)$. If $|\hat{S}|<\frac{|S|}{2}$, skip the rest of this iteration of the algorithm as we have already shrunk $S$ sufficiently.

Now, if there was a vertex $z\in Z$ such that $s\in W(z)$, then $\dmin(z,t) \geq 2D/3$ and we would have found a sufficiently far apart pair of vertices in $U$. Thus, with high  probability (in $N$), if we did not find such a pair, after this step $s\in \hat{S}$ and every $a\in \hat{S}$ has $|W(a)|<0.1 |S|\leq 0.2|\hat{S}|$. Take $S$ to be $\hat{S}$.

We can now randomly sample a set $R$ of $c\log N$ vertices from $S$ and define $S'$ as in \cref{lm:ddirectedgirth} to be the intersection of their incoming balls of radius $D/3$, $S' \coloneqq S \cap \bigcap_{r\in R} B^-(r,D/3) $. If any vertex in $R$ is $U$-degenerate, we find a pair of vertices in $U$ with large min-distance and return them. Otherwise, all vertices in $R$ are of Type 1 and so $s\in S'$. By \cref{lm:ddirectedgirth}, $|S'|\leq 0.8|S|$.

If we have not finished, we have a set $S'$ that satisfies the same conditions as $S$ and is smaller by a factor of $0.8$. We can therefore repeat this algorithm (first ensuring all $W(a)$ are small then sampling a set $R$) $\log_{5/4} N$ times until $|S|<c\log N$, at which point we compute SSSP from all of its vertices.

If $s\in S$, and the set contained no points of Type 2, then every time we constructed the set $S'$ we retained $s$, so the algorithm will find at some point a pair of vertices in $U$ of min-distance $\geq D/3$. 

We repeat the algorithm again, this time assuming no vertices are Type 1 and taking the intersection of $B^+(r, D/3)$.  The algorithm runs in time $\Oh(m\cdot  \text{polylog}N)$ and by a union bound succeeds with probability $\geq 1 - 1/N^3$.

\end{proof}

Combining \cref{clm:typeloop} and \cref{lm:sametypeset} gives us the following algorithm, which will be a useful subroutine in our main min-diameter approximation algorithm. 

\begin{lemma}\label{lm:bigoverlap}
    Given a graph $G=(V,E)$, let $U\subseteq V$ be a subset of vertices containing $s,t$ and $N\geq |U|$. Let $x\in U$ and let $A\subseteq U$ be a subset such that for every $a\in A$ we have $\ell$ vertices\footnote{The set $Q(a)$ can also be smaller than $\ell$ by repeating vertices.} $Q(a) = \{a = u_1, u_2, \ldots, u_\ell\} \subseteq U$ such that
    \begin{enumerate}
        \item $x\in Q(a)$,
        \item $d(u_i, u_{i+1_{\text{mod }\ell}}) < D/3$ for $i\in [\ell]$.
    \end{enumerate}
    Then if $\{s,t\}\cap A \neq \emptyset$, in $\Oh(\ell \cdot m\cdot \text{\emph{polylog}}~N)$ time we can find a pair of points in $U$ with min-distance $\geq D/3$ with probability $\geq 1-1/N^3$. Denote this algorithm by $\singletypealg (G, U,A,x, \ell, D, N)$.
\end{lemma}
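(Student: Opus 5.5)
The plan is to reduce to \cref{lm:sametypeset} with $S = A$. The only hypothesis of that lemma that needs work is the type condition: either $A\cap T_1=\emptyset$ or $A\cap T_2=\emptyset$. We will not establish this unconditionally. Instead, we run a preprocessing step which either already outputs a far pair, or certifies that every vertex we care about is non-$U$-degenerate. Once that is certified, \cref{clm:typeloop} forces all of $A$ to share the type of $x$.

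First we run SSSP from $x$ in $G$. If some $u\in U$ has $\dmin(x,u)\ge D/3$, we return $(x,u)$. Otherwise $x$ is non-$U$-degenerate, and since $s,t\in U$ it is Type~1 or Type~2; say it is Type~1 (the other case is symmetric). Now take any $a\in A$ and its cycle $Q(a)=\{a=u_1,\dots,u_\ell\}$, which contains $x$ and has consecutive distances $<D/3$. If all the $u_i$ are non-$U$-degenerate, \cref{clm:typeloop} puts them all in the same type, so $a\in T_1$. Hence the only way $a$ can be Type~2 (or degenerate) is if some $u_i\in Q(a)$ is $U$-degenerate.

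We cannot afford to run SSSP from every $u_i$ of every $a$, since that would cost $|A|\ell$ searches. So we apply the type argument only to those $a$ relevant for the diameter endpoint. Suppose w.l.o.g.\ $s\in A$. We run SSSP from each of the $\ell$ vertices of $Q(s)$, at cost $\Oh(\ell m)$. We cannot identify $s$, but we can sidestep this. If some vertex of $Q(s)$ is $U$-degenerate, running from it finds a far pair. Otherwise every vertex of $Q(s)$ is non-degenerate, so $s$ has the type of $x$, namely Type~1. The remaining task is to find $s$ inside $A$ while only ever relying on the fact that $s$ itself is Type~1.

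For that we re-examine the proof of \cref{lm:sametypeset}. The hypothesis that $S$ contains no Type~2 vertex is used only in two places: to argue that sampled vertices $z\in Z$ and $r\in R$ are Type~1 when they are non-degenerate, and hence that $s$ survives in $S'$. So we strengthen the sampling step. Every time a vertex $r\in A$ is sampled, we run SSSP not only from $r$ but from all $\ell$ vertices of $Q(r)$. If none of them is $U$-degenerate, \cref{clm:typeloop} applied to $Q(r)$ (which contains $x$) shows that $r$ has the same type as $x$, hence $r\in T_1$ and $d(s,r)<D/3$. This is exactly what keeps $s$ in the intersection of incoming balls. The same reasoning handles the $W(z)$ removal step.

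Each of the $\Oh(\log N)$ rounds samples $\Oh(\log N)$ vertices, and each sample now costs $\ell$ SSSP computations, for a total of $\Oh(\ell m\,\text{polylog} N)$. We run both the Type~1 and Type~2 versions, since we do not know the type of $x$. The success probability carries over from \cref{lm:sametypeset} by the same union bound.

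The main obstacle is making this modification rigorous. We must check that \cref{lm:ddirectedgirth} only needs the structural assumption on $W(a)$ within the current set $S$, and not that all of $S$ is single-typed. The size-reduction argument of \cref{lm:ddirectedgirth} is purely combinatorial, so it holds regardless of types. Retaining $s$ needs only the types of the sampled vertices, which the cycle checks certify. So the modified procedure should go through.
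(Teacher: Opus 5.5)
Your proposal is correct and follows the paper's proof. You run the procedure of \cref{lm:sametypeset} with $S=A$, but for every sampled vertex $r$ you also run SSSP from all of $Q(r)$; either some vertex is $U$-degenerate and yields a far pair, or \cref{clm:typeloop} certifies that $r$ has the type of $x$. That is all that is needed to keep $s$ or $t$ in the intersected set, and running both the Type~1 and Type~2 versions costs an extra factor of $\ell$.

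The paragraph about running SSSP from $Q(s)$ is superfluous, and not implementable, as you note. In fact $s$ is never Type~1, because $t\in U$ makes $s$ itself $U$-degenerate. Your final argument correctly relies only on the types of the sampled vertices.
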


We again note that the algorithm in the above lemma runs in $\Oh(\ell \cdot m\cdot \text{polylog}~N)$ time even when the conditions are not met.

\begin{proof}[Proof of \cref{lm:bigoverlap}]
    We perform the same algorithm as in \cref{lm:sametypeset}, taking $S = A$, with one change. Every time we sample a point $a\in A$ and run SSSP from it, we additionally run SSSP from every points in $Q(a)$. If any of these points is $U$-degenerate, we have found a pair of points in $U$ with min-distance $\geq D/3$ and we return it. Otherwise, by \cref{clm:typeloop}, every sampled point $a\in A$ has the same type as $x$. As with \cref{lm:sametypeset}, we run the algorithm twice, first assuming all points are Type 1 (or degenerate) and then assuming all points are Type 2. Thus, if $x\in T_1$ the algorithm succeeds on the first run and otherwise it succeeds on the second run.

    The algorithm succeeds with probability $\geq 1 - 1/N^3$ as before. The runtime incurs a blowup of $\ell$, as now every call to SSSP is replaced by $\ell$ such calls.
\end{proof}

    \section{3-Approximation to Min-Diameter}\label{sec:mindiam}

We can now construct our main algorithm and prove \cref{thm:mindiam}. Let $D$ be a threshold and assume again that $\mindiam(G)\geq D$, $\dmin(s,t) \geq D$ and we would like to find a pair of points with min-distance $\geq D/3$.

We will start with the set $C=V$ as our set of potential diameter endpoints. At each step we will remove some points from $C$ and then divide it into two balanced sets $C^+, C^-$ with the guarantee that if $s,t$ were in $C$ at the beginning of the algorithm then either $s,t\in C^+$ or $s,t\in C^-$. We will then recurse on the two smaller sets.

The algorithm receives as input the depth of the recursion $i$, a diameter estimate $D$, the size of the original graph $N$ and a value $L$ that depends on the size of the original graph and remains constant as the graph gets smaller, a set of potential diameter endpoints $C$ and up to $L+1$ sets of padding. At depth $i$ of the recursion we will have $L - i+1$ padding sets, with the last $i$ sets being empty. 

Given an $n$-node $m$-edge graph $G = (V,E)$, we set $N=n, L = \ceil{\log_{9/8} n}$ and first call the algorithm on the input: $\mindiamalg(i=0, D,N, L,C = V, P_1 = V, P_2 = V, \ldots, P_{L+1} = V) $. For the full pseudocode see \cref{alg:mindiam3approx}.

\begin{algorithm}
     \caption{$\mindiamalg(i, N, D, L, C, P_1, \ldots, P_{L+1})$ }\label{alg:mindiam3approx}
     \begin{algorithmic}[1]
        \State All distances in this algorithm are computed in $G[P_{L-i+1}]$: every construction of a ball around a point or set involves an implicit call to SSSP on $G[P_{L-i+1}]$. If this call finds a pair of points in $P_{L-i}$ of min-distance $\geq D/3$, return them. \label{ln:startmindiamalg}
        \State If $|C| \leq 2$, compute the distance between the points $y,z\in C$. Return the pair if $\dmin(x,y) \geq D/3$ and null otherwise.
        \State Sample $x\leftarrow C$.\label{ln:samplingx}
        \State Compute $C^+_x, C^-_x$, if either has size $> \frac{8}{9}|C|$, resample $x$ and try again, after $3\log n$ failed attempts return null. \label{ln:resamplex}
        \State $P_0^+ \leftarrow C^+_x, P_0^- \leftarrow C^-_x$
        \For {$j = 1,\ldots, L-i$}
            \State $P^+_j \leftarrow B^+(P^+_{j-1}, D/3)\cap P_j$
            \State $P^-_j \leftarrow B^-(P^-_{j-1}, D/3)\cap P_j$
        \EndFor
        \State $W_x \leftarrow P_{L-i}^+ \cap P_{L-i}^-$
        \State For every $w\in W_x$ store
        $2(L-i-1)$ vertices $w_1^+, \ldots, w_{L-i-1}^+,w_1^-, \ldots, w_{L-i-1}^-\in P_{L-i-1}$ such that $d(x, w_1^+), d(w_1^+,w_2^+), \ldots, d(w_{L-i-1}^+, w), d(w, w_{L-i-1}^-), \ldots, d(w_2^-, w_1^-), d(w_1^-, x) < D/3$.
        \If {$|E[W_x]| > |E[P_{L-i}]| / L$}
            \State $A_{L-i}^+ \leftarrow W_x, A_{L-i}^-\leftarrow W_x $
            \For{$j = L-i-1, L-i-2, \ldots, 0$}
                \State $A_j^+ \leftarrow P_j \cap B^-(A_{j+1}^+, D/3)$
                \State $\forall a^+_j\in A_j^+$ store a vertex $a_{j+1}^+\in A_{j+1}^+$ such that $d(a_j^+, a_{j+1}^+) < D/3$.
                \State $A_j^- \leftarrow P_j \cap B^+(A_{j+1}^-, D/3)$
                \State $\forall a_j^-\in A_j^-$ store  a vertex $a_{j+1}^-\in A_{j+1}^-$ such that $d(a_{j+1}^-, a_{j}^-) < D/3$.
            \EndFor
            \State $A\leftarrow A_0^+ \cup A_0^-$
            \State $(y,z)\leftarrow \singletypealg (G[P_{L-i+1}], P_{L-i},A,x, \Oh(L), D, N)$ \label{ln:callsingletypealg} 
            \State \Return $(y,z)$ if not null. \State Otherwise, $C\leftarrow C\setminus A$, go to line \ref{ln:samplingx}.\label{ln:removeA}
        \EndIf \label{ln:endbigoverlap}
        \State $(y_1, z_1) \leftarrow \mindiamalg(i+1, D,N, L, C_x^+, P_1^+, \ldots, P_{L-i}^+, \emptyset, \ldots, \emptyset)$
        \State $(y_2, z_2) \leftarrow \mindiamalg(i+1, D, N, L, C_x^-, P_1^-, \ldots, P_{L-i}^-, \emptyset, \ldots, \emptyset)$
        \State \Return $(y_1, z_1)$ or $(y_2, z_2)$ which isn't null. 
        \end{algorithmic}
\end{algorithm}

At depth $i$ we run all of our distance computation on $G[P_{L-i+1}]$. For ease of notation, denote by $m$ the number of edges in this graph, $m = |E[P_{L-i+1}]|$.

We begin by sampling a point $x\in C$ and compute $C_x^+, C_x^-$ according to \cref{def:partitionc}. By \cref{lm:balancedpartition}, with probability $\geq \frac{1}{2}$ we have an $x$ such that $|C_x^+|,|C_x^-| \leq \frac{8}{9}|C|$. If this is not the case, we resample. After $3\log N$ failed attempts we give up. This step succeeds with probability $\geq 1-\frac{1}{N^3}$.

Next, we define $L-i$ layers of padding. We take $P_1^+$ to be the outgoing $D/3$ neighborhood of $C^+_x$ in $P_1$. 
Similarly, we take $P_1^-$ to be the incoming $D/3$ neighborhood of $C_x^-$ in $P_1$. 
We repeat this $L-i$ time: taking $P^+_j \coloneqq B^+(P^+_{j-1}, D/3)\cap P_{j}, P^-_j \coloneqq B^-(P^-_{j-1}, D/3)\cap P_{j}$ for $2\leq j\leq L-i$.

Now we would like to recurse. If $W_x\coloneqq P_{L-i}^+\cap P_{L-i}^-$ contains $\leq \frac{m}{L}$ edges, we call $\mindiamalg$ twice. Once on $\mindiamalg(i+1, D,N,L, C_x^+, P_1^+, \ldots, P_{L-i}^+, \emptyset, \ldots, \emptyset)$
and another time on $\mindiamalg(i+1, D,N,L, C_x^-, P_1^-, \ldots, P_{L-i}^-, \emptyset, \ldots, \emptyset)$. 

If $s,t\in C$ then either $s,t\in C_x^+$, in which case we will show that the first call finds a pair of points of min-distance $\geq D/3$, or $s,t\in C_x^-$, in which case the second call finds such a pair.

We are left to handle the case where $|W_x| > \frac{m}{L}$. We note that by the invariants that our algorithm maintains, any pair of points in $ P_{L-i}$ (including $C, W_x$ and padding sets in between them) of min-distance $<D/3$ in $G$ will have min-distance $<D/3$ in $G[P_{L-i+1}]$ so if we find a pair of points in this set of min-distance $\geq D/3$ we return it and complete the algorithm.

To handle the large overlap, we isolate a set of vertices in $C$ that are contained in a loop (in $P_{L-i}$) between $x$ and $W_x$. If $s,t\in C$, then by \cref{clm:typeloop}, all these points will be either $C$-degenerate in $G[P_{L-i+1}]$ (and thus we will have found a pair of far away points in $P_{L-i}$) or of the same type. We take our set $A\subseteq C$ to include point in $C^+_x$ that can reach $W_x$ within distance $<(L-i)D/3$  and points in $C_x^-$ that can be reached from $W_x$ within distance $<(L-i)D/3$. 

More precisely\footnote{This is almost equivalent to taking $A = C^+_x\cap B^-(W_x, (L-i)D/3) \cup C_x^-\cap B^+(W_x, (L-i)D/3)$, but allows us to find the points along the path more directly without the potential of introducing additive errors.}, take $A_{L-i-1}^+ \coloneqq P_{L-i-1}\cap B^-(W_x, D/3)$ and $A_{L-i-1}^- \coloneqq P_{L-i-1}\cap B^+(W_x, D/3)$. For $j = L-i-2, \ldots, 0$ define $A_{j}^+ \coloneqq P_{j}\cap B^-(A_{j+1}^+, D/3)$ and $A_j^- \coloneqq P_{j}\cap B^+(A_{j+1}^-, D/3)$. Finally, take $A = A_0^+ \cup A_0^- \subseteq C$.

For each vertex in $a\in A$ we can store a set of $k=\Oh(L)$ points $a=a_0, \ldots, a_{k-1} \in P_{L-i}$ such that $x\in \{a_1, \ldots, a_k\}$ and $d_{G[P_{L-i+1}]}(a_j, a_{j+1_{\text{mod} k}}) < D/3$ for $j\in [k-1]$ as follows. If $a\in A_0^+$ we take a point $a_1\in A_1^+$ that $a$ can reach in distance $<D/3$, followed by points $a_2\in A_2^+,\ldots, a_{L-i-1}\in A_{L-i-1}^+, w\in W_x$ that can each reach the next in distance $<D/3$. We can then take $L-i$ points in the $L-i$ layers of padding that allow $w$ to reach $x$ in steps consisting of paths of length $<D/3$, ending with $x$. Similarly, we can find such a set for any point in $A\cap A_0^-$.

We now have all we need to be able to run the algorithm $\singletypealg$ from \cref{lm:bigoverlap}, on $P_{L-i}$, the set $A$ and point $x$, with $k$ points for each vertex in $A$, in the graph $G[P_{L-i}]$. It runs in time $\Oh(m L\cdot \text{polylog N})$ and if $s\in A$ or $t\in A$ it finds a pair of points in $P_{L-i}$ with min-distance $\geq D/3$ in $G[P_{L-i+1}]$ with probability $\geq 1-\frac{1}{N^3}$.

Thus, if $\singletypealg(G[P_{L-i+1}],P_{L-i}, A, x, k, D, N)$ didn't return a pair of far apart vertices, we can assume that $s,t\notin A$ and remove $A$ from $C$ (without removing the vertices from $P_1,\ldots, P_{L-i+1}$, as they can still be used in shortest paths). 

We now restart the algorithm with the smaller set $C'\subseteq C$ - sample a new $x'\in C'$, resample until the sets $C_{x'}^+, C_{x'}^-$ are not too large and compute $W_{x'}$. We note that $W_x,W_{x'}$ are disjoint. Otherwise, if $x'\in C_x^+$ then it would be able to reach $W_x$ and would have been removed from $C$. Similarly, if $x\in C_x^-$ it would be reached from $W_x$ and thus removed from $C$.

Therefore, the sets $W_x$ are disjoint over the different iterations of the algorithm. In particular, the edges contained within each set are disjoint. Since each $|E[W_x]|\geq m/L$, after $\leq L$ iterations we will find a vertex $x$ with $|E[W_x]|<m/L$ and proceed with the recursion. 

Thus, our algorithm runs in time $\Ohtilde(L\cdot m) = \Ohtilde(m)$ before calling the recursive step. At each level of the recursion the number of edges blows up by a factor of at most $(1 + 1/L)$. Since $|C|$ shrinks by a factor of $8/9$ every call, at depth $L$ we have $|C|\leq i^L\cdot n \leq 2$, so we do not reach any recursion depth greater than $L$. Thus, the total number of edges at each stage of the recursion is bounded by $(1+1/L)^L\cdot |E| = \Oh(|E|)$, giving a total runtime of $\Ohtilde(|E|)$.
In the following claims we formally argue correctness and runtime.

\paragraph{Correctness:} We argue for the correctness of the algorithm using the following two claims. First we show that at every depth of the recursion at least one call will retain a set $C$ with a pair of points of min-distance $\geq D$. Next, we argue that any pair of points with distance $<D/3$ will also have distance $<D/3$ in the recursive calls. These claims guarantee that if $\mindiam(G)\geq D$ then we will eventually find a pair of points of min-distance $\geq D/3$. 
Since at recursion depth $i$ all the points we consider are in $P_{L-i}$ and we compute distances in $G[P_{L-i+1}]$, any pair of points we return will in fact have the desired distance in $G$.


\begin{claim}\label{clm:cpartitionsameside}
    If there exist a pair of points $s,t\in C$ such that $\dmin^G(s,t) \geq D$ then during the recursive calls either $s,t\in C_x^+$ or $s,t\in C_x^-$.
\end{claim}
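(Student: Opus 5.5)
The argument is a direct consequence of \cref{def:partitionc} together with the triangle inequality, applied to distances in the graph the algorithm actually uses. Fix a recursive call at depth $i$ in which $s,t\in C$, and let $x$ be the sampled vertex that the call recurses on. If $x\in\{s,t\}$, say $x=s$, then the pair is not split in the intended sense, and I will handle this case separately (see the last paragraph). Otherwise $s,t\in C_x^+\cup C_x^-$. I argue by contradiction: suppose $s\in C_x^+$ and $t\in C_x^-$ (the other mixed case is symmetric).

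By \cref{def:partitionc}, $s\in C^+_x$ means $d(x,s)\le d(s,x)$, and $t\in C_x^-$ means $d(t,x)\le d(x,t)$. Hence $d(x,s)=\dmin(x,s)$ and $d(t,x)=\dmin(x,t)$. The triangle inequality gives $d(t,s)\le d(t,x)+d(x,s)=\dmin(t,x)+\dmin(x,s)$. Since $\dmin(s,t)\ge D$, it follows that $\dmin(t,x)+\dmin(x,s)\ge D$, so one of them is at least $D/2\ge D/3$. This means $x$ is $C$-degenerate.

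The key step is to check that this is a contradiction with respect to the distances the algorithm uses, namely those in $G[P_{L-i+1}]$. Line~\ref{ln:startmindiamalg} states that whenever a ball is computed around $x$ (to form $C_x^\pm$), the SSSP from $x$ in $G[P_{L-i+1}]$ returns any pair in $P_{L-i}\supseteq C$ at min-distance $\ge D/3$. So if $x$ is degenerate in that graph, the algorithm has already returned and never reaches the recursion. To cover all cases, I note that $d_{G[P]}\ge d_G$, so the triangle-inequality argument can be run entirely inside $H=G[P_{L-i+1}]$. I need $\dmin^H(s,t)\ge D$, which holds because removing vertices only increases distances. The partition is defined using $H$-distances, so the same computation gives $\dmin^H(x,s)+\dmin^H(x,t)\ge D$. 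Hence $x$ is degenerate in $H$, with a witness in $C\subseteq P_{L-i}$, and line~\ref{ln:startmindiamalg} terminates the call with a valid pair. Therefore, whenever the algorithm actually recurses, $s$ and $t$ lie on the same side.

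I also need to check the iterations that remove $A$ from $C$ (line~\ref{ln:removeA}). Since $\singletypealg$ would have found a pair, with high probability, had $s$ or $t$ been in $A$, removing $A$ keeps $s,t\in C$. The remaining corner case is $x\in\{s,t\}$, say $x=s$. Then SSSP from $x$ already sees $t$ at min-distance $\ge D\ge D/3$ in $H$, and the call returns at line~\ref{ln:startmindiamalg}. I expect the main obstacle to be this bookkeeping: being precise that degeneracy is measured in $H$, not in $G$, and that the returned pair is therefore valid in $G$ by the padding invariant.
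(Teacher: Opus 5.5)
Your proof is correct and takes essentially the same route as the paper. Like the paper, you use the inequalities defining $C_x^\pm$ to identify the one-way distances through $x$ with min-distances. You then apply the triangle inequality in $G[P_{L-i+1}]$, so a split of $s,t$ would make $x$ degenerate and be caught at line~\ref{ln:startmindiamalg}, and you use $\singletypealg$ to justify removing $A$. Your explicit identity $d(x,s)=\dmin(x,s)$ and your handling of the corner case $x\in\{s,t\}$ fill in details the paper leaves implicit.
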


\begin{proof}
    Denote by $P$ the set $P_{L-i+1}$ on which we run our graph searches. Since $\dmin^G(s,t)\geq D$ we also have that $\dmin^{G[P]}(s,t) \geq D$.
    Now note that $C_x^+, C_x^-$ partition the set $C$. If $s,t$ are not in the same set then assume w.l.o.g that $s\in C_x^-, t\in C_x^+$. If $d_{G[P]}(s,x) \geq D/3$ or $d_{G[P]}(x,t) \geq D/3$ then we have found a pair of points in $U$ with min-distance $\geq D/3$ in line \ref{ln:resamplex} and returned them. Otherwise, we have that $d_{G[P]}(s,t) \leq d_{G[P]}(s,x) + d_{G[P]}(x,t) < 2D/3$, contradiction.

    By \cref{lm:sametypeset}, if $s\in A$ or $t\in A$ when we call $\singletypealg$, then we will complete the algorithm. Thus, when we remove vertices from $C$ in line \ref{ln:removeA} we retain that $s,t\in C$ at the beginning of the next iteration.
\end{proof}

\begin{claim}\label{clm:pathscontained}
    At recursion depth $i$, $C = P_0 \subseteq P_1 \subseteq \ldots P_{L-i+1}$ and for any $j\in [L-i]$, any path of length $<D/3$ between a pair of points $x,y\in P_{j}$ is contained in $P_{j+1}$.
\end{claim}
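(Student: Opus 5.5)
Induction on the recursion depth $i$ gives both parts together. Throughout, ``a path is contained in $P_{j+1}$'' means every vertex of the path lies in $P_{j+1}$. Since all distances at depth $i$ are computed in $G[P_{L-i+1}]$, this containment is exactly what makes distances $<D/3$ in $G$ among points of $P_j$ agree with those in $G[P_{j+1}]$.

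\textbf{Base case $i=0$.} We have $C=V$ and $P_1=\cdots=P_{L+1}=V$. The chain $C=P_0\subseteq P_1\subseteq\cdots\subseteq P_{L+1}$ is trivial, and every path lies in $V=P_{j+1}$.

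\textbf{Inductive step.} Assume the statement holds at depth $i$ for the sets $C,P_1,\ldots,P_{L-i+1}$. Consider the call on $(C_x^+,P_1^+,\ldots,P_{L-i}^+)$; the $-$ side is symmetric, with incoming balls in place of outgoing ones. At depth $i+1$ we must show $C_x^+=P_0^+\subseteq P_1^+\subseteq\cdots\subseteq P_{L-i}^+$ and the path property for $j\in[L-i-1]$.

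Removing $A$ from $C$ in line~\ref{ln:removeA} only shrinks $C$ and leaves every $P_j$ unchanged. Hence the invariant still holds for the current $C$ when $x$ is finally chosen.

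\textbf{Chain.} By induction on $j$ we show $P_{j-1}^+\subseteq P_{j-1}$, and from this $P_{j-1}^+\subseteq P_j^+$. The base case is $P_0^+=C_x^+\subseteq C=P_0$. For the step, suppose $P_{j-1}^+\subseteq P_{j-1}$. Every vertex is in its own outgoing ball of radius $D/3>0$, so $P_{j-1}^+\subseteq B^+(P_{j-1}^+,D/3)$. Also $P_{j-1}^+\subseteq P_{j-1}\subseteq P_j$ by the old chain. Together these give $P_{j-1}^+\subseteq B^+(P_{j-1}^+,D/3)\cap P_j=P_j^+$, and since $P_j^+\subseteq P_j$ by definition, the induction continues.

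\textbf{Path property.} Let $j\in[L-i-1]$, let $u,v\in P_j^+$, and let $\pi$ be a path from $u$ to $v$ of length $<D/3$. It suffices to show that every vertex $w$ of $\pi$ lies in $P_{j+1}^+=B^+(P_j^+,D/3)\cap P_{j+1}$.

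\emph{Membership in the ball.} The prefix of $\pi$ from $u$ to $w$ has length $<D/3$. Since $u\in P_j^+$, this gives $w\in B^+(P_j^+,D/3)$.

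\emph{Membership in $P_{j+1}$.} We have $u,v\in P_j^+\subseteq P_j$, so the old invariant at index $j$ (valid since $j\le L-i-1\le L-i$) puts all of $\pi$ inside $P_{j+1}$.

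\textbf{Main obstacle.} The delicate point is the meaning of distances. The balls $B^+$ are computed in $G[P_{L-i+1}]$, while the claim speaks of paths in $G$. We must check that the prefix of $\pi$ lies in $G[P_{L-i+1}]$, so that the computed ball really contains $w$. This follows from the old invariant, since $\pi\subseteq P_{j+1}\subseteq P_{L-i+1}$. So we first apply the old invariant to place $\pi$ inside $P_{j+1}$, and only then use that the distances computed inside $P_{L-i+1}$ agree with those in $G$ along $\pi$.

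The index bookkeeping also needs care. The new padding $P^+_{L-i}$ plays the role of $P_{L-(i+1)+1}$, so the new path property is only needed for $j\le L-i-1$. This is exactly why the old property, available for $j+1\le L-i$, suffices, and why one padding layer is lost per level.
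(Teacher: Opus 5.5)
Your proposal is correct and takes the same approach as the paper. Both argue by induction on the recursion depth, get the chain directly from the definitions of $P_j^+$, and obtain the path property by first using the old invariant to place the path inside $P_{j+1}\subseteq P_{L-i+1}$, so that it lies in $B^+_{G[P_{L-i+1}]}(P_j^+,D/3)$ and hence in $P_{j+1}^+$. You are somewhat more careful than the paper about the chain, the removal of $A$ from $C$, and the index shift from depth $i$ to $i+1$ (the paper's own indexing mixes $i$ and $i-1$).
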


\begin{proof}
    We will prove this inductively. At the beginning of iteration $i=0$ all sets are equal to $V$ so the claim holds. Now assume it holds at the beginning of an iteration with $i-1$. First, for any $x$, $C^+_x \subseteq P_1^+ \subseteq \ldots \subseteq P_{L-i}^+$ by definition. Similarly,  $C_x^- \subseteq P_1^-\subseteq \ldots \subseteq P_{L-i}^-$.
    
    For any $j\in [L-i]$ and any pair of points $x,y\in P_j^+$, any path of length $<D/3$ between them is contained in $P_{j+1}\subseteq P_{L-i+1}$ by our assumption. Thus, any such path is contained in $B^+_{G[P_{L-i+1}]}(P_j^+, D/3)$ and so is contained in $P_{j+1}^+$. Similarly, any path of length $<D/3$ between a pair of points in $P_j^-$ is contained in $P_{j+1}^-$. Thus these conditions are maintained on the next call to the algorithm.
\end{proof}

\begin{cor}
    For any pair of points $x,y\in P_{L-i-1}^+$  we have \[\dmin^{G[P_{L-i+1}]}(x,y)< D/3 \Rightarrow \dmin^{G[P^+_{L-i}]}(x,y) < D/3.\] Likewise, for any pair of points $x,y\in P_{L-i-1}^-$ we have \[\dmin^{G[P_{L-i+1}]}(x,y)< D/3 \Rightarrow \dmin^{G[P^-_{L-i}]}(x,y) < D/3.\]
\end{cor}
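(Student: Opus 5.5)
The corollary follows from \cref{clm:pathscontained} together with the way the sets $P_j^+$ are built; I treat the $+$ case, and the $-$ case is symmetric. Fix $x,y\in P_{L-i-1}^+$ with $\dmin^{G[P_{L-i+1}]}(x,y)<D/3$. Without loss of generality $d_{G[P_{L-i+1}]}(x,y)<D/3$. Let $\pi$ be a shortest such path in $G[P_{L-i+1}]$.

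The goal is to show that every vertex of $\pi$ lies in $P_{L-i}^+$. Then $\pi$ is a path in $G[P^+_{L-i}]$, so $d_{G[P^+_{L-i}]}(x,y)<D/3$, and hence $\dmin^{G[P^+_{L-i}]}(x,y)<D/3$.

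The first step is to place $\pi$ inside the correct unrestricted padding layer. By \cref{clm:pathscontained} the padding sets are nested, so $P^+_{L-i-1}\subseteq P_{L-i-1}$ and therefore $x,y\in P_{L-i-1}$. The same claim, applied with $j=L-i-1\in[L-i]$, says that any path of length $<D/3$ between two points of $P_{L-i-1}$ is contained in $P_{L-i}$. So $\pi\subseteq P_{L-i}$.

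The second step is to use the ball construction. Every vertex $w$ on $\pi$ satisfies $d_{G[P_{L-i+1}]}(x,w)\le|\pi|<D/3$. Since $x\in P^+_{L-i-1}$, this gives $w\in B^+(P^+_{L-i-1},D/3)$, where the ball is computed in $G[P_{L-i+1}]$ as the algorithm specifies. Intersecting with the first step gives $w\in B^+(P^+_{L-i-1},D/3)\cap P_{L-i}=P^+_{L-i}$. This is exactly the membership we wanted, and it completes the $+$ case.

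The main subtlety is the reverse-direction case $d(y,x)<D/3$. It is handled by swapping the roles of $x$ and $y$, since both lie in $P^+_{L-i-1}$; note that the ball is always taken around the path's start vertex. For the $-$ case we use $B^-$. A path from $x$ to $y$ of length $<D/3$ has every vertex $w$ satisfying $d(w,y)<D/3$, so $w\in B^-(P^-_{L-i-1},D/3)$ with $y$ as the center. Combined with $\pi\subseteq P_{L-i}$ as before, this gives $w\in P^-_{L-i}$.

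Finally, the ball distances are computed in $G[P_{L-i+1}]$, and this is the same graph in which $\pi$ is taken, so no mismatch arises between the graph defining the balls and the graph containing the path.
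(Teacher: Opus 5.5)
Your proof is correct and is essentially the paper's argument: every vertex of a path of length $<D/3$ between $x$ and $y$ in $G[P_{L-i+1}]$ lies in the $D/3$-ball around $P^{+}_{L-i-1}$ (outgoing from the path's start) or around $P^{-}_{L-i-1}$ (incoming to the path's end). You are more careful than the paper, whose proof treats $P^+_{L-i}$ as the bare ball $B^+(P^+_{L-i-1},D/3)$ and silently drops the intersection with $P_{L-i}$; your appeal to \cref{clm:pathscontained} supplies exactly that missing step. Two small caveats: $P^+_{L-i-1}\subseteq P_{L-i-1}$ follows directly from the definition of $P^+_j$, not from the nesting of the $P_j$; and for $i=L-1$ you need that claim at $j=0$, which holds by the same induction.
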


\begin{proof}
    We will prove the first statement, as the second is symmetric. If $\dmin^{G[P_{L-i+1}]}(x,y)< D/3$, then there is a path of length $<D/3$ between $x,y$ in $P_{L-1+1}$. Thus, all vertices in this path will be contained in the ball of radius $D/3$ around $x,y$, which is in turn contained in  $P_{L-i}^+ = B^+_{G[P_{L-1+1}]}(P_{L-i-1}^+, D/3)$. Therefore, $\dmin^{G[P^+_{L-i}]}(x,y) < D/3$. 
\end{proof}



\paragraph{Runtime:} Finally, we argue that our algorithm runs in $\Ohtilde(|E|)$ time. First we show that the sets $W_x$ are in fact disjoint.

\begin{claim}
    Let $x_1, x_2$ be two points sampled in line \ref{ln:samplingx}. Then $W_{x_1}\cap W_{x_2}=\emptyset$.
\end{claim}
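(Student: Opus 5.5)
Without loss of generality let $x_1$ be sampled before $x_2$, in the same call of $\mindiamalg$ (at the same depth, with the same padding sets $P_1,\ldots,P_{L-i+1}$). Since $x_2$ was sampled after $x_1$, the iteration for $x_1$ ended in line \ref{ln:removeA}: $W_{x_1}$ was large, $\singletypealg$ returned null, and the set $A^{(1)}=A_0^+\cup A_0^-$ built from $W_{x_1}$ was removed from $C$. Hence $x_2\notin A^{(1)}$. Also $x_2\neq x_1$ and $x_2$ lies in the original $C$, so $x_2\in C^+_{x_1}$ or $x_2\in C^-_{x_1}$. I would argue by contradiction: assume some $w\in W_{x_1}\cap W_{x_2}$ and show that $x_2\in A^{(1)}$.

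The key step is to convert $w\in W_{x_2}$ into a chain from $x_2$ to $w$. Since $w\in P^{+}_{L-i}(x_2)$, unwinding $P^+_j(x_2)=B^+(P^+_{j-1}(x_2),D/3)\cap P_j$ gives vertices $x_2=u_0,u_1,\ldots,u_{L-i}=w$ with $u_j\in P_j$ and $d(u_{j-1},u_j)<D/3$. The catch is that $u_0$ is some vertex of $C^+_{x_2}$, not necessarily $x_2$. So I would first show that $x_2$ itself reaches $W_{x_2}$ through the layers. This is exactly what the stored vertices $w^+_1,\ldots,w^+_{L-i-1},w^-_{L-i-1},\ldots,w^-_1$ provide: a chain $x_2\to w^+_1\to\cdots\to w$ and a chain $w\to\cdots\to w^-_1\to x_2$, each hop of length $<D/3$ and with the $j$-th vertex in $P_j$. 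I would justify this by noting that $u_0\in C^+_{x_2}$ has $d(x_2,u_0)<d(u_0,x_2)$, and the ball computations are within $P_{L-i+1}$. If this step turns out to need an extra layer of slack, I would accept the stored vertices as given by the algorithm.

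Now split on the side of $x_2$ relative to $x_1$. If $x_2\in C^+_{x_1}=P_0$, use the chain $x_2=v_0,v_1,\ldots,v_{L-i-1},w$ with $v_j\in P_j$ and each hop $<D/3$. Then $v_{L-i-1}\in P_{L-i-1}\cap B^-(W_{x_1},D/3)=A^+_{L-i-1}$. Inducting downward, $v_j\in P_j\cap B^-(A^+_{j+1},D/3)=A^+_j$, so $x_2\in A^+_0\subseteq A^{(1)}$, a contradiction. If $x_2\in C^-_{x_1}$, use the reverse chain $w\to\cdots\to x_2$ and the definition of $A^-_j$ in the same way to get $x_2\in A^-_0$, again a contradiction.

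To be careful, the $A$-sets use the layers $P_j$ and not $P^\pm_j$. So I only need $v_j\in P_j$, which follows because $P^{\pm}_j\subseteq P_j$ and, by \cref{clm:pathscontained}, all short paths stay inside $P_{L-i+1}$, where all distances are computed. The main obstacle is the middle step, guaranteeing that the chain starts at $x_2$ itself rather than at an arbitrary vertex of $C^{\pm}_{x_2}$. I would lean on the stored witness vertices from the algorithm to get this.
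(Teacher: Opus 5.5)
\textbf{Verdict.} Your strategy matches the paper's: a common $w$ would force a surviving vertex into the removed set $A$, with a case split on which side of $x_1$ the vertex $x_2$ lies. However, the step you flagged is a genuine gap, and appealing to the stored witnesses does not close it.

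\textbf{The off-by-one in the chain.} Unwinding $w\in P^+_{L-i}(x_2)$ gives a layered chain $u_0,u_1,\ldots,u_{L-i-1},w$ with $u_j\in P_j$ and exactly $L-i$ hops, starting at $u_0\in C^+_{x_2}$. Prepending the hop $x_2\to u_0$ gives $L-i+1$ hops. But $A^+_0$ only contains vertices of $C^+_{x_1}$ that reach $W_{x_1}=A^+_{L-i}$ in $L-i$ layered hops. Nothing in the construction gives an $(L-i)$-hop chain from $x_2$ itself. The pseudocode asserts the stored vertices lie only in $P_{L-i-1}$, not in $P_j$, and never justifies that they form such a chain. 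So invoking them assumes what has to be proved. The paper's proof uses the chain $x_2\to a^+_0\to\cdots\to a^+_{L-i-1}\to w$, which carries the same extra hop, and passes over this point.

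\textbf{The case $x_2=x_1$.} With $A^\pm_0\subseteq C^\pm_{x_1}$, the vertex $x_1$ is never removed. So $x_2=x_1$ can occur, and your assertion $x_2\neq x_1$ is unjustified.

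\textbf{Repair.} Derive the contradiction for $u_0$ instead of $x_2$. Since $u_0\in C^+_{x_2}$, it survived the removal of $A$.

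Assume $x_2\in C^+_{x_1}$ or $x_2=x_1$. Then $u_0\neq x_1$: for $x_2\neq x_1$, $x_1\in C^+_{x_2}$ would put $x_2\in C^-_{x_1}$.
\begin{itemize}
\item If $u_0\in C^+_{x_1}$ (automatic when $x_2=x_1$, since then $C^+_{x_2}\subseteq C^+_{x_1}$), your downward induction applied to $u_0,u_1,\ldots,u_{L-i-1},w$ gives $u_0\in A^+_0$.
\item If $u_0\in C^-_{x_1}$, note that $x_2\in C^+_{x_1}=P^+_0(x_1)$. Also $d(x_2,u_0)<D/3$, by non-degeneracy of $x_2$ together with $u_0\in C^+_{x_2}$. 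Hence $u_0\in P^+_1(x_1)\subseteq P^+_{L-i}(x_1)$; the layers are monotone because $d(v,v)=0$ and $P^+_j\subseteq P_j\subseteq P_{j+1}$. Trivially $u_0\in P^-_0(x_1)\subseteq P^-_{L-i}(x_1)$, so $u_0\in W_{x_1}$. Zero-length steps then give $u_0\in A^-_j$ for every $j$ (as $u_0\in C\subseteq P_j$), in particular $u_0\in A^-_0$.
\end{itemize}
Either way $u_0$ was removed, a contradiction.

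The case $x_2\in C^-_{x_1}$ is symmetric. Use the endpoint $u'_0\in C^-_{x_2}$ of the chain from $w$ into $C^-_{x_2}$, together with $d(u'_0,x_2)<D/3$.
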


\begin{proof}
    Assume w.l.o.g that $x_1$ was sampled at an earlier iteration than $x_2$ and assume towards contradiction that $\exists w\in W_{x_1}\cap W_{x_2}$. By definition of $W_{x_2}$, there exist $a_0^+\in C,a_1^+\in P_1, \ldots, a_{L-i-1}^+\in P_{L-i-1}, a_{L-i-1}^-\in P_{L-i-1}, \ldots a_1^-\in P_1, a_0^-\in C$ such that
    \begin{align*}
    d(x_2, a_0^+), d(a_0^+, a_1^+), \ldots, d(a_{L-i-2}^+, a_{L-i-1}^+), d(a_{L-i-1}^+, w)&, \\d(w, a_{L-i-1}^-), d(a_{L-i-1}^-, a_{L-i-2}^-), \ldots, d(a_1^-, a_0^-),d(a_0^-, x_w)& < D/3.
    \end{align*}
    
    Now consider the way that $C$ was divided after sampling $x_1$. If $x_2\in C_{x_1}^+$ then $a_0^+, \ldots, a_{L-i-1}^+, w\in W_{x_1}$ are evidence that $x_2\in A_0^+$, since the sets $P_j$ don't change within one call to $\mindiamalg$. Otherwise, $x_2\in C_{x_1}^-$, then $a_0^-, \ldots, a_{L-i-1}^-$ are evidence that $x_2\in A_0^-$. Thus, we would have $x_2\in A$ and $x_2$ would have been removed from $C$ at this point and not sampled at a later time, contradiction.
\end{proof}

Now we can bound the number of iterations of lines \ref{ln:samplingx}-\ref{ln:endbigoverlap} by $L$. The runtime of each iteration is dominated by the runtime of line \ref{ln:callsingletypealg}, $\Oh(mL\cdot \text{polylog}N)$. Thus, lines $\ref{ln:startmindiamalg}$-\ref{ln:endbigoverlap} run in time $\Oh(L^2m\cdot \text{polylog}N)$, where recall that $m$ denotes the number of edges in this instance, $|E[P_{L-i+1}]|$.

Next, we bound the total number of edges across all iterations of the recursion at depth $i$.

\begin{claim}
    The total number of edges across all iterations of the recursion at depth $i$ is at most $(1+1/L)^i\cdot |E|$.
\end{claim}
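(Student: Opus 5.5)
We argue by induction on the depth $i$, with the statement that the sum, over all calls at depth $i$, of $|E[P_{L-i+1}]|$ (the number of edges in the graph $G[P_{L-i+1}]$ on which that call computes distances) is at most $(1+1/L)^i|E|$. The base case $i=0$ is immediate: there is a single call with $P_{L+1}=V$, so the count is exactly $|E|$.

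For the inductive step, fix a call at depth $i$ and let $m=|E[P_{L-i+1}]|$. This call recurses only after it has found a sample $x$ with $|E[W_x]|\le |E[P_{L-i}]|/L\le m/L$. Its two children are called with padding sequences $P_1^+,\ldots,P_{L-i}^+$ and $P_1^-,\ldots,P_{L-i}^-$, so at depth $i+1$ each child computes distances in $G[P^{\pm}_{L-(i+1)+1}]=G[P^{\pm}_{L-i}]$. By construction $P^+_{L-i},P^-_{L-i}\subseteq P_{L-i}\subseteq P_{L-i+1}$. Together the two children therefore hold at most
\[|E[P^+_{L-i}]|+|E[P^-_{L-i}]| .\]

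We then use that an edge counted in both terms has both endpoints in $P^+_{L-i}$ and both in $P^-_{L-i}$, hence both in $W_x=P^+_{L-i}\cap P^-_{L-i}$. Every edge of $E[P^+_{L-i}]\cup E[P^-_{L-i}]$ lies in $E[P_{L-i+1}]$, so by inclusion--exclusion
\[|E[P^+_{L-i}]|+|E[P^-_{L-i}]|\le |E[P_{L-i+1}]|+|E[W_x]|\le m+m/L=(1+1/L)m .\]
The restarts in lines \ref{ln:samplingx}--\ref{ln:endbigoverlap} only remove vertices from $C$ and do not touch the padding sets. So only the final $x$, the one actually used to recurse, matters for this bound.

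Summing this bound over all calls at depth $i$ and applying the induction hypothesis gives at most $(1+1/L)^{i+1}|E|$ at depth $i+1$. The only delicate point is the overlap step: we need $E[P^+_{L-i}]\cap E[P^-_{L-i}]=E[W_x]$, which holds because the edge sets are induced. We also need the threshold in the recursion test to be compared against a quantity at most $m$, which holds since $P_{L-i}\subseteq P_{L-i+1}$ by \cref{clm:pathscontained}. The remaining work is bookkeeping of which padding level each depth uses.
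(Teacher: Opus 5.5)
Your proposal is correct and follows the paper's own argument. Both use induction on depth: each depth-$i$ instance with $m_j$ edges recurses only after the overlap test passes, so its two children together hold at most $(1+1/L)m_j$ edges, and summing over instances gives the bound. You also spell out two steps the paper leaves implicit: the identity $E[P^+_{L-i}]\cap E[P^-_{L-i}]=E[W_x]$ (used in the inclusion--exclusion), and the inequality $|E[P_{L-i}]|\le |E[P_{L-i+1}]|$, which reconciles the algorithm's threshold with $m/L$.
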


\begin{proof}
    We prove this by induction. At iteration $i=0$ we have $|E|$ edges and one instance. Now assume the claim is true for $i-1$: we have $r$ instances of the algorithm running, with $m_1, m_2, \ldots, m_r$ edges in each instance such that $\sum_{j=1}^r m_j \leq (1+1/L)^{i-1}|E|$. Each instance $j$ is split into two instances such that the number of edges that overlap these two instances is $\leq m_j/L$. Thus, the number of total edges in these two instances combined is $\leq (1+1/L)m_j$. Therefore, the total number of edges across all instances at recursion depth $i$ is bounded by 
    \[
    \sum_{j=1}^r (1+1/L)m_j \leq (1+1/L)^i |E|.
    \]
\end{proof}

Now, since the runtime of each instance before its recursive call is $\Oh(L^2m \cdot \text{polylog}N)$, we can bound the total runtime of all instances of the recursion at depth $i$ by \[\Oh((1+1/L)^i|E|\cdot L^2\cdot \text{polylog}N).\] Next, we bound the recursion depth by $L$.

\begin{claim}
    If $i \geq L$ then $|C| \leq 2$.
\end{claim}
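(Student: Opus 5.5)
We prove by induction on the recursion depth that at depth $i$ the candidate set satisfies $|C|\le (8/9)^i n$. The claim then follows from the choice $L=\ceil{\log_{9/8} n}$: for $i\ge L$ we get $|C|\le (8/9)^{L} n\le 1\le 2$.

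The base case $i=0$ is immediate, since the initial call has $C=V$ and so $|C|=n$.

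For the inductive step, consider an instance at depth $i$ that actually recurses, so it reaches the two recursive calls after line \ref{ln:endbigoverlap}. We track how $C$ changes within this instance before the split.
\begin{enumerate}
\item In line \ref{ln:removeA} we only remove vertices from $C$, so the current $C$ is a subset of the input set and in particular has size at most $(8/9)^i n$.
\item The $x$ used for the split passed the test in line \ref{ln:resamplex}, so $|C_x^+|,|C_x^-|\le \frac{8}{9}|C|$ for the current $C$.
\item The children receive $C_x^+$ and $C_x^-$ as their candidate sets, so each child at depth $i+1$ has $|C|\le (8/9)^{i+1}n$.
\end{enumerate}
An instance that instead returns null after the failed attempts in line \ref{ln:resamplex}, or that returns a pair, spawns no children, so it imposes no constraint on deeper levels.

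The only subtle point is that the balancing test is applied to the shrunken $C$ left after the removals in line \ref{ln:removeA}, not to the original input set. Since shrinking only helps, the bound still holds. We should also check the base case in line 2, where $|C|\le 2$: such an instance returns immediately and never recurses. This is consistent with the claim and shows that no call occurs at depth greater than $L$.

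If $N$ differs from $n$, the same argument works with $L=\ceil{\log_{9/8}N}$ and $|V|\le N$.
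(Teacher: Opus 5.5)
Your proof is correct and takes the same approach as the paper: induct on the depth to get $|C|\le (8/9)^i n$ from the balancing test in line \ref{ln:resamplex}, then use $L\ge \log_{9/8} n$. You also make explicit that the removals in line \ref{ln:removeA} only shrink $C$ before the balancing test, which the paper leaves implicit.
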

\begin{proof}
    At every iteration, we recurse on the sets $C_x^+,C_x^-$ of size $\leq \frac{8}{9}|C|$. When $i=0$ we have $|C|=n$ and so by induction at iteration $i$ we have $|C|\leq \left(\frac{8}{9}\right)^i\cdot n$. Therefore, by our choice of $L \geq \log_{9/8} n$, if $i=L$ we have $|C|\leq 2$.
\end{proof}


Therefore, we have $\leq L$ levels of recursion. At each level we have a total runtime of $\Ohtilde((1+1/L)^i|E|\cdot L^2)$ so our final runtime is bounded by 
\[
    \Oh((1+1/L)^L|E|\cdot L^3\cdot \text{polylog}N) = \Ohtilde(|E|).
\]

\paragraph{Success Probability:} Two steps of our algorithm are randomized - line \ref{ln:resamplex} and our call to $\singletypealg$ in line $\ref{ln:callsingletypealg}$. Since each succeeds with probability $\geq 1 - \frac{1}{N^3}$, for the original size of the graph $N=n$, and we run the algorithm $\leq n$ times, using a union bound our algorithm succeeds with probability $\geq 1 - \frac{1}{n}$.













    \section{Extension to Directed 2-Mode Diameter}\label{sec:multimode}
In this section we show how to adapt the algorithm for computing a 3-approximation to min-diameter in linear time to the more general setting of directed 2-mode diameter. We follow the same structure of an algorithm, generalizing the notions we introduced throughout. The key observation here is that most properties we obtained from a pair of points having $d(u,v) < D/3$ in the min-distance case we can now derive from a pair of points that have \emph{either} $d_1(u,v) < D/3$ \emph{or} $d_2(v,u) < D/3$. The main idea will be replacing outgoing (resp. incoming) balls with the union of outgoing (incoming) balls in $G_1$ and incoming (outgoing) balls in $G_2$ and show that the proofs generalize to this case.

For the sake of readability, we note where the proof differs from the ones given for min-diameter, instead of repeating entire proofs. Let $\Gg = (V,E_1, E_2)$ be a directed 2-multimode graph.
Again assume we are given a threshold $D$ and if $\diam(\Gg) \geq D$ we would like to find a pair of points whose 2-mode distance is greater than $D/3$.

First we generalize \cref{def:partitionc} to the multimode setting to obtain a balanced partition of the set $C$. Intuitively, the set $C_x^+$ consists of the vertices $v$ such that the distance from $x$ to $v$ is achieved in $G_1$ and the distance from $v$ to $x$ is achieved in $G_2$ and the set $C_x^-$ consists of the vertices for which the opposite is true. Formally, to make $C_x^+, C_x^-$ a proper partition with our desired properties we define them as follows.

\begin{definition}\label{def:partioncmm}
    Let $\Gg = (V,E_1, E_2), C\subseteq V$ and $x\in C$. Assign vertices in $C$ a unique ID from $[|C|]$ and define $C_x^+$ as follows,
    \[
    C_x^+ = \begin{cases}
    v\in C: & d_1(x,v) < d_2(x,v) \text{ and } d_1(v,x) \geq d_2(v,x) \\
    \text{or} & d_1(x,v) < d_2(x,v) \text{ and } d_1(v,x) < d_2(v,x) \text{ and } ID(x) < ID(v)\\
    \text{or} & d_1(v,x) > d_2(v,x) \text{ and } d_1(x,v) \leq d_2(x,v) \\
    \text{or} & d_1(v,x) > d_2(v,x) \text{ and } d_1(x,v) > d_2(x,v) \text{ and } ID(x) < ID(v)\\
    \text{or} & d_1(v,x) = d_2(v,x) \text{ and } d_1(x,v) = d_2(x,v) \text{ and } ID(x) < ID(v)
\end{cases}
    .\]
    Define $C_x^- = C \setminus (C_x^+ \cup \{x\})$.
\end{definition}

\begin{lemma}\label{lm:balancedpartitionmm}
    Let $\Gg = (V, E_1, E_2)$. For any $C\subseteq V$ there are $\geq \frac{|C|}{2}$ vertices $x\in C$ such that $|C^+_x|\leq \frac{8}{9}|C|$ and $|C^-_x| \leq \frac{8}{9}|C|$.
\end{lemma}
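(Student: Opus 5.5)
The plan is to show that the relation ``$v\in C_x^+$'' defines a tournament on $C$, and then to reuse the counting argument behind \cref{lm:balancedpartition}, which needs only that tournament property. Concretely, I will show that for every pair of distinct $x,v\in C$, exactly one of $v\in C_x^+$ and $x\in C_v^+$ holds. Orienting $x\to v$ whenever $v\in C_x^+$ then gives a tournament on $C$. Since $C_x^-=C\setminus(C_x^+\cup\{x\})$, the set $C_x^+$ is the out-neighbourhood of $x$ and $C_x^-$ is its in-neighbourhood.

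\textbf{Step 1: the tournament property.} For an ordered pair $(x,v)$ set $\sigma=\operatorname{sign}(d_1(x,v)-d_2(x,v))$ and $\tau=\operatorname{sign}(d_1(v,x)-d_2(v,x))$. Reading off \cref{def:partioncmm}, $v\in C_x^+$ holds exactly in the following cases:
\begin{itemize}
\item $\sigma=-$ and $\tau\in\{0,+\}$;
\item $\tau=+$ and $\sigma\in\{-,0\}$;
\item $\sigma=\tau\in\{-,+,0\}$ and $ID(x)<ID(v)$.
\end{itemize}
Viewing the pair from $v$ swaps the two signs, so the pair for $(v,x)$ is $(\sigma',\tau')=(\tau,\sigma)$. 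I then check the nine sign patterns.
\begin{itemize}
\item The diagonal patterns $\sigma=\tau$ are decided by ID. Reversing the roles of $x$ and $v$ reverses the ID comparison, so exactly one side wins.
\item For the off-diagonal patterns $(-,+),(-,0),(0,+)$, the vertex $v$ lies in $C_x^+$. The swapped patterns $(+,-),(0,-),(+,0)$ satisfy none of the conditions, so $x\notin C_v^+$.
\item The remaining three off-diagonal patterns are the mirror images of the previous three, so the conclusion holds with $x$ and $v$ exchanged.
\end{itemize}
This case check is the only place where care is needed. I expect it to be the main obstacle, though it is a finite verification.

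\textbf{Step 2: counting.} Let $k=|C|$ and let $X$ be the set of vertices with $|C_x^+|>\frac{8}{9}k$. Summing out-degrees over $X$, edges inside $X$ contribute exactly $\binom{|X|}{2}$ and edges leaving $X$ contribute at most $|X|(k-|X|)$. Hence
\[
\frac{8}{9}k\,|X| < |X|\frac{|X|-1}{2}+|X|(k-|X|).
\]
This rearranges to $|X|<\frac{2}{9}k$. The symmetric argument with in-degrees shows that fewer than $\frac{2}{9}k$ vertices have $|C_x^-|>\frac{8}{9}k$. Therefore fewer than $\frac{4}{9}k\le\frac{k}{2}$ vertices are unbalanced, and at least $\frac{|C|}{2}$ vertices $x$ satisfy both $|C_x^+|\le\frac{8}{9}|C|$ and $|C_x^-|\le\frac{8}{9}|C|$, as claimed.
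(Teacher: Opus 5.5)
Your proof is correct and takes essentially the paper's route. The paper encodes the same relation as a skew-symmetric $\pm 1$ matrix (your tournament), simply asserting $v\in C_x^+\iff x\in C_v^-$ where you verify it through the nine sign patterns, and then double-counts the entries contributed by a hypothetical set of at least $\frac{k}{4}$ unbalanced vertices, just as you count out-degrees over $X$ (your direct count is slightly cleaner and gives the sharper bound $|X|<\frac{2}{9}k$).
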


\begin{proof}
    We prove this in the same way as the original lemma (\cref{lm:balancedpartition}) in \cite{mindistance2019}. Let $k = |C|$ and let $M$ be a $k\times k$ matrix indexed by the vertices of $C$ where for every $x,y\in C$ $M_{x,y} = 1$ if $y\in C_x^+$, $M_{x,y} = -1$ if $y\in C_x^-$ and $M_{x,x} = 0$. By our definition of $C_x^+, C_x^-$ we have that $M$ is skew-symmetric, since $v\in C_x^+ \iff x\in C_v^-$, so $M_{x,y} = -M_{y,x}$. For any subsets $A,B\subseteq C$ denote by $M_{A,B}$ the $|A|\times |B|$ submatrix of $M$ consisting of rows indexed by $A$ and columns indexed by $B$.

    Assume towards contradiction there is a subset $U\subseteq C$ of $\geq \frac{k}{4}$ vertices $x$ such that $|C_x^+| > \frac{8}{9}|C|$ then $M_{C,U}$ contains at least  $\frac{8}{9}k \cdot \frac{k}{4} = \frac{2}{9}k^2$ ones. On the other hand, $M_{U,U}$ is skew-symmetric, so at most half of its entries are 1, meaning it contains at most $\frac{k^2}{32}$ ones. $M_{C\setminus U, U}$ is a $\frac{3k}{4}\times \frac{k}{4}$ matrix so it contains at most $\frac{3k^2}{16}$ ones. Thus, in total, $M_{C, U}$ contains at most $\frac{k^2}{32} + \frac{3k^2}{16} = \frac{7k^2}{32} < \frac{2k^2}{9}$ ones, contradiction.

    Therefore, fewer than $\frac{k}{4}$ of vertices in $C$ have $|C_x^+| > \frac{8}{9}|C|$. Likewise, we can show fewer than $\frac{k}{4}$ vertices have $|C_x^-| > \frac{8}{9}|C|$. Hence, for at least half the vertices $x\in C$ both $|C_x^+|\leq \frac{8}{9}|C|$ and $|C_x^-|\leq \frac{8}{9}|C|$.
    
\end{proof}

A central component to our min-diameter algorithm was the type classification. Let $s,t$ be a pair of diameter endpoints, $d_\Gg(s,t) \geq D$. Note that unlike in the min-diameter, these points are no longer symmetric. We say a vertex is \emph{degenerate} if it has 2-mode distance $\geq D/3$ from $s$ or to $t$. By similar arguments to what we've seen before, if a non-degenerate point $v$ has $d_1(s,v) < D/3$ then $d_2(v,t) < D/3$. Otherwise, it must have $d_2(s,v) < D/3, d_1(v,t) < D/3$. We therefore define points of Type 1 and Type 2 and degenerate as follows, and obtain the following claim.

\begin{definition}
    Given a threshold $D$ and two vertices $s,t$ such that $d_\Gg(s,t) \geq D$, we say a vertex $v$ is \emph{Type 1} if $d_1(s,v)< D/3, d_2(v,t) < D/3$ and \emph{Type 2} if $d_2(s,v)< D/3, d_1(v,t) < D/3$. Denote the set of Type 1 points by $T_1$ and Type 2 points by $T_2$.
\end{definition}

\begin{definition}
    A vertex $v$ is $U$-degenerate if $\max_{u\in U}(\max(d_\Gg(u, v),d_\Gg(v,u)))\geq D/3$.
\end{definition}

\begin{claim}
    Any $v\notin T_1\cup T_2$ is $U$-degenerate for any subset $U\subseteq V$ such that $s,t\in U$.
\end{claim}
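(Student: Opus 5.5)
We argue by contraposition, following the proof of the analogous claim in the min-distance setting. Suppose $v$ is not $U$-degenerate for some $U$ with $s,t\in U$. Then, since $s,t\in U$, we have $d_\Gg(s,v)<D/3$ and $d_\Gg(v,t)<D/3$. We will show that $v\in T_1\cup T_2$.

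First, $d_\Gg(s,v)=\min(d_1(s,v),d_2(s,v))<D/3$, so either $d_1(s,v)<D/3$ or $d_2(s,v)<D/3$. Second, $d_\Gg(v,t)<D/3$ gives $d_1(v,t)<D/3$ or $d_2(v,t)<D/3$. This yields four combinations:
\begin{itemize}
\item $d_1(s,v)<D/3$ and $d_2(v,t)<D/3$: this is exactly Type 1.
\item $d_2(s,v)<D/3$ and $d_1(v,t)<D/3$: this is exactly Type 2.
\end{itemize}
It remains to rule out the two ``same-mode'' combinations. If $d_1(s,v)<D/3$ and $d_1(v,t)<D/3$, then the triangle inequality inside the single graph $G_1$ gives
\[
d_1(s,t)\le d_1(s,v)+d_1(v,t)<2D/3.
\]
Hence $d_\Gg(s,t)<D$, contradicting the choice of $s,t$. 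The combination $d_2(s,v),d_2(v,t)<D/3$ is excluded identically, using the triangle inequality in $G_2$.

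So any non-degenerate $v$ lies in $T_1\cup T_2$, which is the contrapositive of the claim. The only subtle point is that concatenation of paths is legal only within one mode, since the $2$-mode distance itself is not a metric. This is precisely why the mixed-mode combinations survive and become the two types, while the same-mode combinations are the ones we exclude. Note also that we only use the directed distances $s\to v$ and $v\to t$; the degeneracy definition's maximum over both directions is more than needed.
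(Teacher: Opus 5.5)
Your proof is correct and follows the paper's approach. The paper only sketches this claim: a non-degenerate $v$ with $d_1(s,v)<D/3$ must have $d_2(v,t)<D/3$, and otherwise it has $d_2(s,v)<D/3$ and $d_1(v,t)<D/3$. Your four-case analysis, which rules out the same-mode combinations by the triangle inequality inside a single $G_i$, is exactly that sketch written out in full.
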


\begin{figure}[ht]
    \centering
    \includegraphics[width=0.3\textwidth]{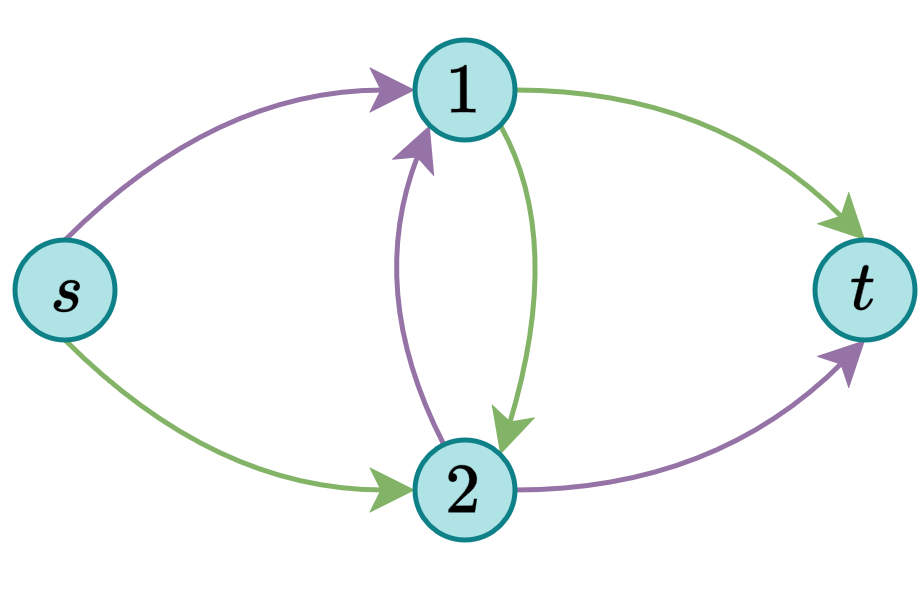}
    \caption{\small Distances between $s,t$ and vertices of type 1 and type 2. Purple edges denote paths of length $<D/3$ in $G_1$ and green edges denote paths of length $<D/3$ in $G_2$.}
    \label{fig:typesmm}
\end{figure}

Next we note a similar property to that used in \cref{clm:typeloop}. By the same arguments used previously, for two non degenerate points $u,v$, if $d_1(u,v) < D/3$ then $u\in T_1$ implies $v\in T_1$ and $v\in T_2$ implies $u\in T_2$. Similarly, if $d_2(u,v) < D/3$ then $u\in T_2$ implies $v\in T_2$ and $v\in T_1$ implies $u\in T_1$. This property gives us the following generalization to \cref{clm:typeloop}. 

\begin{claim}\label{clm:typeloopmultimode}
    Let $v_1, \ldots, v_{k}$ be non-$U$-degenerate vertices for a subset $U$ containing $s$ and $t$ and assume that for any $i\in [k]$ either $d_1(v_i, v_{i+1_{\text{mod }k}}) < D/3$ or $d_2(v_{i+1_{\text{mod }k}}, v_i) < D/3$. Then $v_1, \ldots, v_{k}\in T_1$ or $v_1, \ldots, v_{k}\in T_2$.
\end{claim}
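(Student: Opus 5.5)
The argument mirrors \cref{clm:typeloop}. First I establish a one-step propagation lemma for a single pair of non-$U$-degenerate vertices $u,v$; then I chain it around the cycle.

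\emph{Step 1 (dichotomy).} Since $v$ is non-$U$-degenerate and $s,t\in U$, we have $d_\Gg(s,v)<D/3$ and $d_\Gg(v,t)<D/3$. Hence some mode $a\in\{1,2\}$ has $d_a(s,v)<D/3$, and some mode $b$ has $d_b(v,t)<D/3$. If $a=b$, then $d_a(s,t)\le d_a(s,v)+d_a(v,t)<2D/3$, contradicting $d_\Gg(s,t)\ge D$. So $a\neq b$, and therefore $v\in T_1\cup T_2$. This is exactly the preceding unproved claim, and I would note it here. The same computation shows that no vertex lies in both types: for instance $d_1(s,v)<D/3$ and $d_1(v,t)<D/3$ would give $d_1(s,t)<2D/3$.

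\emph{Step 2 (propagation).} Suppose $u,v$ are non-degenerate and $d_1(u,v)<D/3$.
\begin{itemize}
\item If $u\in T_1$ and $v\in T_2$, then $d_1(s,t)\le d_1(s,u)+d_1(u,v)+d_1(v,t)<D$, a contradiction. So $u\in T_1$ forces $v\in T_1$.
\item Symmetrically, $v\in T_2$ forces $u\in T_2$.
\end{itemize}
Now suppose instead $d_2(v,u)<D/3$. If $u\in T_1$ and $v\in T_2$, then $d_2(s,t)\le d_2(s,v)+d_2(v,u)+d_2(u,t)<D$, using $d_2(s,v)<D/3$ (since $v\in T_2$) and $d_2(u,t)<D/3$ (since $u\in T_1$). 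This is again a contradiction.

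Combining the two cases: whenever the hypothesis of the claim holds for the pair $(v_i,v_{i+1})$, the combination $v_i\in T_1$, $v_{i+1}\in T_2$ is impossible. Equivalently, $v_i\in T_1$ implies $v_{i+1}\in T_1$.

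\emph{Step 3 (chaining).} Suppose $v_1\in T_1$. Step 2 propagates membership in $T_1$ forward along the cycle, so all $v_i\in T_1$. Otherwise $v_1\in T_2$ by Step 1. If some $v_j$ were in $T_1$, forward propagation from $v_j$ around the cycle would reach $v_1$ and put $v_1\in T_1$. That contradicts $v_1\in T_2$, because Step 1 shows the types are disjoint. Hence all $v_i\in T_2$.

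The main care point is the case bookkeeping in Step 2. Specifically, one must check that in the $d_2(v,u)$ case the path runs $s\to v\to u\to t$ in $G_2$, with the correct type memberships supplying each segment; the rest is routine.
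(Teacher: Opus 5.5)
Your proof is correct and follows essentially the paper's route. The paper just states your Step 2 propagation property and chains it around the cycle as in \cref{clm:typeloop}, handling the $T_2$ case by backward propagation instead of your disjointness argument. Your Step 1 also fills in the paper's unproved preceding claim that every non-$U$-degenerate vertex lies in $T_1\cup T_2$.
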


The proof is a straightforward generalization using the property noted above. Next, we generalize \cref{lm:sametypeset}

\begin{lemma}\label{lm:sametypesetmultimode}
    Let $\Gg = (V,E_1, E_2)$ and let $s,t\in T \subseteq V$ and $S\subseteq U$ be a set of vertices such that $\{s,t\}\cap S \neq \emptyset$ and either $S\cap T_1 = \emptyset$ or $S \cap T_2 =\emptyset$. Let $N\geq |U|,m = |E_1| + |E_2|$, then in $\Oh(m\cdot \text{polylog}N)$ time we can find a pair of points in $U$ of 2-mode distance $\geq D/3$ with probability $\geq 1-1/N^3$.
\end{lemma}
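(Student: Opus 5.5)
I plan to mirror the proof of \cref{lm:sametypeset}, replacing every directed ball with a \emph{mixed} ball. Concretely, define
\[
\hat B^+(v,r) \coloneqq B_1^+(v,r)\cup B_2^-(v,r), \qquad \hat B^-(v,r) \coloneqq B_1^-(v,r)\cup B_2^+(v,r),
\]
where $B_i^\pm$ are balls in $G_i$. Both are computable with two SSSP calls (one in $G_1$, one in the reverse of $G_2$). Write $u\preceq v$ if $d_1(u,v)<D/3$ or $d_2(v,u)<D/3$; then $v\in\hat B^+(u,D/3)$ iff $u\preceq v$. The approach rests on the monotonicity noted before \cref{clm:typeloopmultimode}: for non-degenerate $u,v$ with $u\preceq v$, we have $u\in T_1\Rightarrow v\in T_1$ and $v\in T_2\Rightarrow u\in T_2$. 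I will treat the case $S\cap T_2=\emptyset$ first and run the symmetric procedure afterwards. In that case I aim to retain a target endpoint $e\in\{s,t\}\cap S$.

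The first step is to find the right "anchor'' relation, which replaces the fact $d(s,x)<D/3$ for Type~1 $x$. If $s\in S$, then every non-degenerate $x\in S$ is Type~1, so $d_1(s,x)<D/3$ and hence $s\preceq x$, i.e.\ $s\in\hat B^-(x,D/3)$. If instead $t\in S$, then $d_2(x,t)<D/3$, so $t\preceq x$ as well, i.e.\ $t\in\hat B^-(x,D/3)$. So in either case the retained endpoint lies in $\hat B^-(r,D/3)$ for every sampled non-degenerate $r$. Define $S'=S\cap\bigcap_{r\in R}\hat B^-(r,D/3)$. Any sampled $r$ that is $U$-degenerate is detected by its SSSP computations (in both graphs and both directions), and we return the far pair.

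Next I reprove \cref{lm:ddirectedgirth} for the relation $\preceq$ instead of $d(\cdot,\cdot)<p$. Its proof only counts ordered pairs $(a,b)$ with $a\preceq b$ and $b\preceq a$, using the triangle inequality nowhere, so it transfers verbatim. This gives $|S'|\le 0.8|S|$ w.h.p., provided every $a\in S$ has $|W(a)|\le0.2|S|$, where now $W(a)=\hat B^+(a,D/3)\cap\hat B^-(a,D/3)\cap S$.

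The preprocessing step follows the original: sample $Z$ of size $5\log_{10/9}N$ to hit all large $W(a)$, and remove $\bigcup_z W(z)$. The main obstacle is justifying that the retained endpoint is not removed in this step. For this I would check that $e\in W(z)$ for a non-degenerate $z$ forces a contradiction or a detected far pair. Take $e=s$. Then $z\preceq s$, so $d_1(z,s)<D/3$ or $d_2(s,z)<D/3$. Combined with Type~1 ($d_2(z,t)<D/3$, $d_1(s,z)<D/3$), this yields either $d_2(s,t)<2D/3$, or a short $G_1$ cycle through $s$ and $z$. I expect the latter alone to give no contradiction. The fix I would try is to apply monotonicity to $z\preceq s$, treating $s$ as non-degenerate (if $s$ is degenerate we are done via any SSSP from $s$... but $s$ is unknown). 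So instead I would argue directly: $z\preceq s$ with $z\in T_1$ means, by the chain argument, that the $d_1(s,t)$ or $d_2(s,t)$ bound is short via $s\to z\to t$. Specifically, $d_2(s,z)<D/3$ together with $d_2(z,t)<D/3$ gives $d_2(s,t)<2D/3$, a contradiction. In the remaining subcase I would check whether $z$ is $U$-degenerate through distances to $t$; this is the step I would verify carefully, possibly by switching the anchor to $t$.

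The iteration then repeats $\log_{5/4}N$ times down to $c\log N$ vertices, after which we run SSSP from all of them. We run the procedure twice (once per type) and take a union bound for the $1-1/N^3$ success probability and the $\Oh(m\,\mathrm{polylog}N)$ running time.
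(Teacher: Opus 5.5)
\textbf{The gap is the step you flag yourself, and it cannot be closed.} The rest of your setup is sound: the anchor argument holds, and \cref{lm:ddirectedgirth} does transfer to an arbitrary relation. The trouble is preprocessing. Take $e=s$ and a non-degenerate $z\in T_1$ with $d_1(z,s)<D/3$. Then $s\in W(z)$, yet nothing forces $z$ to be degenerate. Its two-way link to $s$ lives in $G_1$, while its link to $t$ ($d_2(z,t)<D/3$) lives in $G_2$, so the triangle inequality never combines them. Moreover, $d_1(t,z)$ is unconstrained and $t$ need not lie in $S$, so neither fallback (degeneracy through $t$, or re-anchoring at $t$) is available.

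If all of $S$ is such a mutually $G_1$-close cluster around $s$, then $W(z)=S$ for every $z$. Your removal step then deletes $s$ unless $Z$ happens to contain a degenerate vertex. Dropping the removal does not help either: every $\hat B^-(r,D/3)$ contains all of $S$, so the intersections never shrink.

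The paper's proof does not get past this point. It splits into four cases (which of $s,t$ lies in $S$, and which type is excluded) and runs the min-distance procedure verbatim inside a single graph $G_1$ or $G_2$ with ordinary directed balls, using both graphs only for degeneracy checks. In the case $s\in S$, $S\cap T_2=\emptyset$, the step ``$s\in W(z)$ makes $z$ far from $t$'' only yields $d_1(z,t)>2D/3$. That is useless, because $d_2(z,t)<D/3$. Your mixed relation is a genuinely neater organisation: Type~1 vertices are $\preceq$-above both $s$ and $t$, so one procedure covers both anchors. But it hits the same wall. In the min-distance setting the step works only because $d(s,z),d(z,s)<D/3$ composes with either short connection between $z$ and $t$.

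\textbf{The statement as written cannot be proved without refuting the (randomized) Orthogonal Vectors Hypothesis, and hence SETH.} Given $\mathcal A,\mathcal B\subseteq\{0,1\}^d$ with $|\mathcal A|=|\mathcal B|=k$, build the following instance.
\begin{itemize}
\item Vertices: $\mathcal A\cup\mathcal B\cup\{c_1,\dots,c_d,h_A,h_B\}$.
\item Unit-weight edges: $E_2=\{(a,c_\ell): a[\ell]=1\}\cup\{(c_\ell,b): b[\ell]=1\}$ and $E_1=\{(a,h_A),(h_A,a),(b,h_B),(h_B,b): a\in\mathcal A, b\in\mathcal B\}\cup\{(h_B,h_A)\}$.
\item Parameters: $U=\mathcal A\cup\mathcal B$, $S=\mathcal A$, $D=10$.
\end{itemize}
Inside $U$ we have $d_1(a,a')=d_1(b,b')=2$, $d_1(b,a)=3$, and $d_2(a,b)=2$ whenever $\langle a,b\rangle\neq 0$. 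Since $G_1$ has no path from $\mathcal A$ to $\mathcal B$, the only pairs in $U$ with 2-mode distance $\geq D/3$ are orthogonal pairs $(a,b)$, and these are at distance $\infty$. Any orthogonal pair can play $(s,t)$, with $s\in S$. Also $S\cap T_2=\emptyset$, because $d_1(a,t)=\infty$ for all $a\in\mathcal A$. The lemma would therefore return an orthogonal pair in randomized $\Oh(kd\cdot\mathrm{polylog}\,k)$ time.

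This instance is exactly the $G_1$-cluster scenario above. It also satisfies the cycle condition of \cref{lm:bigoverlapmultimode} for any $x\in\mathcal A$, since $d_1(a,x)$ and $d_1(x,a)$ are both $2<D/3$. So the missing step is not a detail left to verify: any correct version needs a stronger hypothesis on $S$ than single-typedness.
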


\begin{proof}
    Note that unlike in \cref{lm:sametypeset}, in this case the vertices $s,t$ are not symmetric and we have two different edge sets to consider. We handle 4 cases separately.

    First consider the case where $s\in A$ and $S\cap T_2 = \emptyset$. We now follow the same proof as that of \cref{lm:sametypeset} in the graph $G_1$ and find $s$ or some pair of points of 2-mode distance $\geq D/3$. We omit the details to avoid repetition. We note that while we are only computing balls around points in $G_1$, every time we run SSSP out of a vertex we do it in both $G_1$ and $G_2$ to discover first if it is degenerate.

    We now repeat the algorithm for the case when $t\in A$ and $S\cap T_2 = \emptyset$ - taking intersections of outgoing $D/3$-balls in $G_2$; for the case when $s\in A$ and $S\cap T_1 = \emptyset$ - taking intersections of incoming $D/3$-balls in $G_2$; and finally for the case when $t\in A$ and $S\cap T_1 = \emptyset$ - taking intersections of outgoing $D/3$-balls in $G_1$.

    Using the analysis of \cref{lm:sametypeset} we obtain an algorithm that runs in time $\Oh(m\cdot \text{polylog}~N)$ with success probability $\geq 1-1/N^3$.
\end{proof}

Using these generalized notions of types and degeneracy we can now obtain a generalization to \cref{lm:bigoverlap}.

\begin{lemma}\label{lm:bigoverlapmultimode}
    Given a 2-multimode graph $\Gg = (V, E_1, E_2)$, let $U\subseteq V$ be a subset of vertices containing $s,t$ and $N\geq |U|$. Let $x\in U$ and let $A\subseteq U$ such that for every $a\in A$ we have $\ell$ vertices $Q(a) = \{a = u_1, \ldots, u_{\ell}\}$ such that
    \begin{enumerate}
        \item $x\in Q(a)$,
        \item $\forall i\in [\ell]$, either $d_1(u_i, u_{i+1_{\text{mod } \ell}}) < D/3$ or $d_2(u_{i+1_{\text{mod } \ell}}, u_i)<D/3$.
    \end{enumerate}
    
    Then if $\{s,t\}\cap A \neq \emptyset$, in $\Oh(\ell \cdot m\cdot \text{\emph{polylog}}~N)$ time we can find a pair of points in $U$ with 2-mode distance $\geq D/3$ with probability $\geq 1-1/N^3$. 
    
    Denote this algorithm by $\singletypealgmm (\Gg, U,A,x, \ell, D, N)$.
\end{lemma}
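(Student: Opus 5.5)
The proof mirrors that of \cref{lm:bigoverlap}, with \cref{clm:typeloopmultimode} in place of \cref{clm:typeloop} and \cref{lm:sametypesetmultimode} in place of \cref{lm:sametypeset}. I would run the algorithm of \cref{lm:sametypesetmultimode} with $S=A$, making one change. Whenever a vertex $a\in A$ is sampled (either in the set $Z$ used to shrink the sets $W(a)$, or in the set $R$ whose balls are intersected) and SSSP is run from it, I would also run SSSP from every vertex of $Q(a)$, in both $G_1$ and $G_2$ and in both directions. If any of these vertices is $U$-degenerate, the search exhibits a vertex $u\in U$ at 2-mode distance $\geq D/3$ to or from it, and we return that pair.

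Otherwise every vertex of $Q(a)$ is non-$U$-degenerate. The cyclic sequence $Q(a)=\{a=u_1,\ldots,u_\ell\}$ satisfies the hypothesis of \cref{clm:typeloopmultimode}: each consecutive pair has either $d_1(u_i,u_{i+1})<D/3$ or $d_2(u_{i+1},u_i)<D/3$. So all of $Q(a)$ has one type, and since $x\in Q(a)$, every sampled $a$ shares the type of $x$. This type is fixed independently of $a$. Hence, conditioned on no degenerate vertex being found, every vertex the analysis of \cref{lm:sametypeset} actually uses is of $x$'s type.

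I would then run all four variants from the proof of \cref{lm:sametypesetmultimode}: $s\in A$ or $t\in A$, combined with $x\in T_1$ or $x\in T_2$. Each variant intersects the appropriate balls, for example incoming $D/3$-balls in $G_1$ when $s\in A$ and $x\in T_1$, since a Type~1 vertex $r$ has $d_1(s,r)<D/3$, so $s\in B^-_{G_1}(r,D/3)$. The variant matching the true case keeps $s$ (respectively $t$) in the shrinking set. \cref{lm:ddirectedgirth} still guarantees shrinkage by a factor of $0.8$ per round once the large sets $W(a)$ are removed. After $\Oh(\log N)$ rounds the set has size $\Oh(\log N)$, and SSSP from all of its vertices finds the far pair.

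The main obstacle is checking that the shrinking argument needs the type property only for the sampled vertices, not for all of $A$. The step that removes $W(z)$ for $z\in Z$ uses that if $s\in W(z)$ then $z$ is close to $s$ in both directions within $G_1$, giving a contradiction or a far pair. That argument needs $z$ non-degenerate and of a known type, and both are certified by running SSSP from $Q(z)$. Intersecting the balls of $R$ keeps $s$ because each $r\in R$ is certified to have $x$'s type. Vertices of $A$ that are never sampled may have either type or be degenerate without affecting correctness, because \cref{lm:ddirectedgirth} is purely combinatorial. The union bound over the four variants and the $\Oh(\log N)$ rounds gives success probability $\geq 1-1/N^3$. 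The running time is $\ell$ times that of \cref{lm:sametypesetmultimode}, namely $\Oh(\ell\cdot m\cdot\text{polylog}~N)$, since each SSSP call becomes $\ell$ calls.
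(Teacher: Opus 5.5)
\textbf{There is a genuine gap.} Your route is the one the paper intends. The paper proves this lemma only by analogy with \cref{lm:bigoverlap}: run the procedure of \cref{lm:sametypesetmultimode} on $S=A$, certify every sampled vertex through $Q(a)$ and \cref{clm:typeloopmultimode}, and try all four variants. Your observation that only sampled vertices need certifying is also right. The gap is in the step you single out yourself, the removal of $\bigcup_{z\in Z}W(z)$.

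In the variant ``$s\in A$, $x\in T_1$'' you intersect incoming balls in $G_1$. So \cref{lm:ddirectedgirth} is applied in $G_1$, and $W(z)$ has to be $B_1^+(z,D/3)\cap B_1^-(z,D/3)\cap S$.

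In the min-distance proof, $s\in W(z)$ makes $z$ degenerate because $d(z,t)>2D/3$ and $d(t,z)>2D/3$; the second inequality comes from $d(t,s)\ge D$. Here $s\in W(z)$ gives only $d_1(s,z),d_1(z,s)<D/3$, hence $d_1(z,t)>2D/3$, which merely excludes Type~2. The 2-mode counterpart of $d(t,s)\ge D$ is $d_2(s,t)\ge D$, and using it would require $d_2(s,z)<D/3$. For min-distance $d_2(s,z)=d_1(z,s)$, but a general 2-mode graph lacks this symmetry.

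So a Type~1 vertex $z$ (which has $d_2(z,t)<D/3$) with $s\in W(z)$ is compatible both with non-degeneracy and with $d_\Gg(s,t)\ge D$. The short paths $s\to z$ and $z\to s$ in $G_1$ and $z\to t$ in $G_2$ never combine into a single-mode $s$--$t$ path. Knowing that $z$ has the type of $x$ does not help, since Type~1 is exactly the compatible case. Your claim that this step yields ``a contradiction or a far pair'' is therefore false.

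The failure is real, not just a missing justification. Take $D=30$, $A=\{s,x,v_1,\ldots,v_k\}$, $U=A\cup\{t\}$, unit weights, and
$E_1=\{x\to a,\ a\to x : a\in A\setminus\{x\}\}\cup\{t\to x\}$, $E_2=\{a\to t : a\in A\setminus\{s\}\}$. Then:
\begin{itemize}
\item $d_\Gg(s,t)=\infty$;
\item every $a\neq s$ is non-$U$-degenerate and of Type~1;
\item $Q(a)=\{a,x\}$ satisfies both hypotheses;
\item $(s,t)$ is the only pair in $U$ at 2-mode distance $\ge D/3$.
\end{itemize}
Suppose $s$ itself is not sampled; it is sampled only with probability $\Oh(\log N/k)$. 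The two $G_1$-variants then have $W(z)=A$ for every sampled $z$, so they discard $s$ in the first round. In the two $G_2$-variants every ball $B_2^{\pm}(r,D/3)$ meets $A$ only in $r$, so the intersection is empty. With $N=k+3$ the procedure fails with probability $1-o(1)$.

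Redefining $W$ so that it keeps $s$ does not rescue it. Then $A$ violates the hypothesis of \cref{lm:ddirectedgirth} in $G_1$, and the incoming $G_1$-balls do not shrink $A$ at all.

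The paper's proof of \cref{lm:sametypesetmultimode} (``follow the same proof \dots\ in the graph $G_1$'') skips exactly this step, so you inherited the hole rather than introduced it. A correct proof still needs a genuinely new way to handle large sets of vertices that are mutually $D/3$-close in $G_1$ and contain $s$.
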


We emphasize that this algorithm again runs in $\Oh(m\cdot \text{polylog} N)$ time even when the conditions are not met, in which case it outputs null.\\

We can now state our generalized 3-approximation to 2-mode diameter. The algorithm receives as input the depth of the recursion $i$, a diameter estimate $D$, the size of the original graph $N$, a value $L$ that bounds the recursion depth, a set of potential diameter endpoints $c\subseteq V$ and up to $L+1$ sets of padding vertex.


Given an $n$-node 2-multimode graph $\Gg = (V, E_1, E_2)$ we set $N = n, L = \lceil\log _{9/8}n \rceil$ and call the algorithm on the input $\multimodediamalg(i=0, D, N, L, C = V, P_1 = V, \ldots, P_{L+1} = V$. The full pseudocode appears in \cref{alg:multimodediam3approx}. Other than the different definition of $C_x^+, C_x^-$ (\cref{def:partioncmm}), the only difference from \cref{alg:mindiam3approx} is that when we previsouly took an outgoing ball of radius $D/3$, we now take the union of the outgoing ball in $G_1$ and the incoming ball in $G_2$. Whenever we took an incoming ball of radius $D/3$, we now take the union of an incoming ball in $G_1$ and an outgoing ball in $G_2$. We show that the correctness and runtime claims still hold in this generalized setting.

\begin{algorithm}
     \caption{$\multimodediamalg(i, N, D, L, C, P_1, \ldots, P_{L+1})$ }\label{alg:multimodediam3approx}
     \begin{algorithmic}[1]
        \State All distances in this algorithm are computed in $\Gg[P_{L-i+1}]$: every construction of a ball around a point or set involves an implicit call to SSSP on $\Gg[P_{L-i+1}]$. If this call finds a pair of points in $P_{L-i}$ of 2-mode distance $\geq D/3$, return them. \label{ln:startmindiamalgmm}
        \State If $|C| \leq 2$, compute the distance between the points $y,z\in C$. Return the pair if $\dmin(x,y) \geq D/3$ and null otherwise.
        \State Sample $x\leftarrow C$.\label{ln:samplingxmm}
        \State Compute $C^+_x, C^-_x$, if either has size $> \frac{8}{9}|C|$, resample $x$ and try again, after $3\log n$ failed attempts return null. \label{ln:resamplexmm}
        \State $P_0^+ \leftarrow C^+_x, P_0^- \leftarrow C^-_x$
        \For {$j = 1,\ldots, L-i$}
            \State $P^+_j \leftarrow (B_1^+(P^+_{j-1}, D/3)\cup B_2^-(P^+_{j-1}, D/3)) \cap P_j$
            \State $P^-_j \leftarrow (B_1^-(P^-_{j-1}, D/3)\cup B_2^+(P^-_{j-1}, D/3)) \cap P_j$
        \EndFor
        \State $W_x \leftarrow P_{L-i}^+ \cap P_{L-i}^-$
        \State For every $w\in W_x$ store $2(L-i-1)$ vertices $w_1^+, \ldots, w_{L-i-1}^+,w_1^-, \ldots, w_{L-i-1}^-\in P_{L-i-1}$ such that for every pair $(y,z) \in \{(x, w_1^+), (w_1^+,w_2^+), \ldots, (w_{L-i-1}^+, w), (w, w_{L-i-1}^-), \ldots, (w_2^-, w_1^-), (w_1^-, x)\}$ we have either $d_1(y,z) < D/3$ or $d_2(z,y) < D/3$.
        \If {$|E[W_x]| > |E[P_{L-i}]| / L$}
            \State $A_{L-i}^+ \leftarrow W_x, A_{L-i}^-\leftarrow W_x $
            \For{$j = L-i-1, L-i-2, \ldots, 0$}
                \State $A_j^+ \leftarrow P_j \cap \left(B_1^-(A_{j+1}^+, D/3)\cup B_2^+(A_{j+1}^+, D/3)\right)$
                \State $\forall a^+_j\in A_j^+$ store a vertex $a_{j+1}^+\in A_{j+1}^+$ such that $\min(d_1(a_j^+, a_{j+1}^+), d_2(a_{j+1}^+,a_j^+)) < D/3$.
                \State $A_j^- \leftarrow P_j \cap \left(B_1^+(A_{j+1}^-, D/3)\cup B_2^-(A_{j+1}^-, D/3)\right)$
                \State $\forall a_j^-\in A_j^-$ store a vertex $a_{j+1}^-\in A_{j+1}^-$ such that $\min(d_1(a_{j+1}^-, a_{j}^-),d_2(a_j^-, a_{j+1}^-)) < D/3$.
            \EndFor
            \State $A\leftarrow A_0^+ \cup A_0^-$
            \State $(y,z)\leftarrow \singletypealgmm (\Gg[P_{L-i+1}], P_{L-i},A,x, \Oh(L), D, N)$ \label{ln:callsingletypealgmm} 
            \State \Return $(y,z)$ if not null. \State Otherwise, $C\leftarrow C\setminus A$, go to line \ref{ln:samplingxmm}.\label{ln:removeAmm}
        \EndIf \label{ln:endbigoverlapmm}
        \State $(y_1, z_1) \leftarrow \multimodediamalg(i+1, D,N, L, C_x^+, P_1^+, \ldots, P_{L-i}^+, \emptyset, \ldots, \emptyset)$
        \State $(y_2, z_2) \leftarrow \multimodediamalg(i+1, D, N, L, C_x^-, P_1^-, \ldots, P_{L-i}^-, \emptyset, \ldots, \emptyset)$
        \State \Return $(y_1, z_1)$ or $(y_2, z_2)$ which isn't null. 
        \end{algorithmic}
\end{algorithm}

\paragraph{Correctness:} We show the analogous claims to \cref{clm:cpartitionsameside} and \cref{clm:pathscontained} to obtain correctness.

\begin{claim}\label{clm:cpartitionsamesidemm}
    If there exist a pair of points $s,t\in C$ such that $d_\Gg(s,t) \geq D$ then during the recursive calls either $s,t\in C_x^+$ or $s,t\in C_x^-$.
\end{claim}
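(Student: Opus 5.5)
The plan mirrors the proof of \cref{clm:cpartitionsameside}. Write $P = P_{L-i+1}$ and work in $\Gg[P]$. Since $d_\Gg(s,t)\geq D$, also $d_{\Gg[P]}(s,t)\geq D$, because removing vertices only increases distances in each mode. The sampled vertex $x$ lies in $C\subseteq P_{L-i}$, and we run SSSP from $x$ in both $G_1[P]$ and $G_2[P]$ in line \ref{ln:resamplexmm}. So either we already return a pair of 2-mode distance $\geq D/3$, or $x$ is non-degenerate with respect to $s$ and $t$. In the latter case $d_{\Gg[P]}(x,v), d_{\Gg[P]}(v,x) < D/3$ for $v\in\{s,t\}$. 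The goal is then to show that if $s$ and $t$ lie on different sides of the partition, some mode $k\in\{1,2\}$ has $d_k(s,x)<D/3$ and $d_k(x,t)<D/3$. This would give $d_\Gg(s,t) \leq d_k(s,t) < 2D/3$, a contradiction.

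The first step is to tabulate what each clause of \cref{def:partioncmm} certifies. For $v\in C_x^+$, each clause identifies which mode realises $d_\Gg(x,v)$ and which realises $d_\Gg(v,x)$; combined with non-degeneracy, each such realised distance is $<D/3$. Clause 1 gives $d_1(x,v)$ and $d_2(v,x)$. Clause 2 gives $d_1(x,v)$ and $d_1(v,x)$. Clause 3 gives $d_2(v,x)$ and $d_1(x,v)$. Clause 4 gives $d_2(v,x)$ and $d_2(x,v)$. Clause 5 gives all four distances. For $v\in C_x^-$ we use the consistency property $v\in C_x^-\iff x\in C_v^+$, already used in \cref{lm:balancedpartitionmm}. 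Reading the same table with the roles of $x$ and $v$ exchanged gives the corresponding pairs of short distances.

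The second step is the case analysis. By symmetry of the argument, assume $s\in C_x^+$ and $t\in C_x^-$ (the other orientation is analogous). The \emph{generic} clauses 1, 3 and 5 give the ``cross'' pattern, in which $x$ reaches $v$ in $G_1$ and $v$ reaches $x$ in $G_2$. If $s$ falls under a generic clause we get $d_2(s,x)<D/3$. If $t$ does too, then from $x\in C_t^+$ we get $d_2(x,t)<D/3$. Together these give $d_2(s,t)<2D/3$, a contradiction. The same mechanism works whenever both endpoints fall under clauses certifying the same mode for the legs $s\to x$ and $x\to t$. This includes the mixed cases where one endpoint is in clause 2 and the other in clause 2 or 5, or one is in clause 4 and the other in clause 1, 3, 4 or 5.

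The main obstacle is the ID tie-break clauses 2 and 4. There a vertex is close to $x$ in a single mode in \emph{both} directions, so the table alone need not produce a matching mode. The problematic case is $s$ in clause 2, giving only $G_1$ legs, while $t$ is in clause 4 with respect to $x$, giving only $G_2$ legs. My first attempt would bring in the type of $x$ via \cref{clm:typeloopmultimode} and the Type 1 / Type 2 definitions. If $x\in T_2$ then $d_2(s,x),d_1(x,t)<D/3$. These legs are in different modes, so they do not immediately give a contradiction. The next attempt would be to exploit the strict inequalities in the clause conditions together with $d_2(s,t)\geq D$ and $d_1(s,t)\geq D$. If that still fails, the fallback is to refine the tie-breaking in \cref{def:partioncmm} so that clauses 2 and 4 are assigned consistently with the direction $s\to t$. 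For example, the definition could be modified so that the both-directions-in-one-mode cases never separate a pair of vertices whose legs through $x$ lie in different modes, while checking that skew-symmetry, and hence \cref{lm:balancedpartitionmm}, is preserved.

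The final step is the same as in \cref{clm:cpartitionsameside}. When we remove $A$ from $C$ in line \ref{ln:removeAmm}, \cref{lm:bigoverlapmultimode} guarantees that if $s\in A$ or $t\in A$ we would already have returned a far pair. Hence $s,t$ remain in $C$ at the start of every iteration, and the argument above applies to whichever $x$ is finally used for the recursion.
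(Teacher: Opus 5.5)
\textbf{The gap.} Your reduction is the right one. After the SSSP from $x$, everything comes down to finding a single mode $k$ with $d_k(s,x)<D/3$ and $d_k(x,t)<D/3$, and your table of which mode each clause of \cref{def:partioncmm} certifies is correct. But the proposal stops exactly where the argument has to happen. For the ID tie-break clauses you only list things you might try, so the case analysis is never closed.

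Your list of bad cases is also incomplete. An $s$ in clause 2 clashes with every $t$ whose leg $x\to t$ is realized only in $G_2$, i.e.\ $x\in C_t^+$ via clause 3, clause 4, or clause 1 with strict inequality. Symmetrically, $x\in C_t^+$ via clause 2 clashes with $s$ in clause 3, clause 4, or strict clause 1.

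More importantly, neither the type of $x$ nor the strict inequalities can close this case, because there the statement is false. Take $V=\{s,x,t\}$ with unit weights, $D=6$, $E_1=\{(s,x),(x,s)\}$, $E_2=\{(x,t),(t,x)\}$, and $ID(t)<ID(x)<ID(s)$.
\begin{itemize}
\item $d_1(s,t)=d_2(s,t)=\infty$, so $d_\Gg(s,t)\ge D$.
\item All four 2-mode distances between $x$ and $s,t$ equal $1<D/3$, so the SSSP from $x$ returns nothing; $x$ is a non-degenerate Type~1 vertex.
\item $s$ satisfies clause 2 with $ID(x)<ID(s)$, so $s\in C_x^+$.
\item $t$ has both directions strictly shorter in $G_2$ but $ID(t)<ID(x)$, so no clause applies and $t\in C_x^-$.
\end{itemize}
Both parts have size $1\le \frac{8}{9}|C|$, so $x$ is an accepted pivot that separates $s$ from $t$.

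\textbf{Relation to the paper, and your fallback.} The paper's short proof has the same hole. In its first case it uses that for $s\in C_x^-$, $d_1(s,x)\ge D/3$ implies $d_2(s,x)\ge D/3$, and that for $t\in C_x^+$ the leg $x\to t$ is realized in $G_1$. Both fail when the endpoint reaches its side through the ID tie-break of clause 4. In its ``analogous'' second case, the step $d_2(s,x)\ge D/3\Rightarrow d_1(s,x)\ge D/3$ fails on the example above, where $s$ entered $C_x^+$ through clause 2. So you located the weak point correctly, but what is missing is a correct statement, not a cleverer argument.

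Your fallback of redefining the tie-break is not a routine repair, and it would prove a different claim. The partition may depend only on $x$ and $v$, not on the unknown $s,t$. Skew-symmetry, which \cref{lm:balancedpartitionmm} needs, forces the pairs realized in the same mode in both directions to be split by some orientation, because $v$ has this pattern with respect to $x$ if and only if $x$ has it with respect to $v$. The obvious alternative orientation fails too: orienting by comparing the two directed distances in the realizing mode, as for min-distance, is defeated by the same three-vertex gadget with suitably chosen weights. A fix would need a genuinely different partition or a separate treatment of such pivots, and the downstream correctness argument that relies on this claim would have to be rechecked.
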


\begin{proof}
    All distances in this proof are all in $\Gg[P_{L-i+1}]$. The sets $C_x^+, C_x^-$ partition $C$, so if $s,t$ are not in the same part we have either $s\in C_x^-, t\in C_x^+$ or $s\in C_x^+, t\in C_x^-$. 
    
    In the first case, if $d_1(s,x) \geq D/3$ then $d_2(s,x) \geq D/3$, thus  $d(s,x) \geq D/3$ and we are done. Similarly if $d_1(x,t) \geq D/3$ we are finished. Otherwise, by the triangle inequality $d_1(s,t) < 2D/3$ and so $d(s,t) < 2D/3$, contradiction. The second case is analogous with distances in $G_2$.
\end{proof}

\begin{claim}\label{clm:pathscontainedmm}
    At recursion depth $i$, $C = P_0 \subseteq P_1 \subseteq \ldots P_{L-i+1}$ and for any $j\in [L-i]$, any path of length $<D/3$  in either $G_1$ or $G_2$ between a pair of points $x,y\in P_{j}$ is contained in $P_{j+1}$.
\end{claim}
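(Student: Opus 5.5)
We argue by induction on the recursion depth $i$, following the proof of \cref{clm:pathscontained}. The only change is that the one-directional balls are replaced by the unions of balls used in \cref{alg:multimodediam3approx}.

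\textbf{Base case.} At depth $i=0$ every padding set equals $V$. The chain $C=P_0\subseteq P_1\subseteq\cdots\subseteq P_{L+1}$ and the containment of short paths in $G_1$ or $G_2$ then hold trivially.

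\textbf{Inductive step.} Suppose the claim holds for the call at depth $i-1$ that spawned the current call, with padding sets $P_1,\ldots,P_{L-i+2}$. We verify it for the two children, whose padding sets are $(P_1^+,\ldots,P_{L-i+1}^+)$ and $(P_1^-,\ldots,P_{L-i+1}^-)$ followed by empty sets. We treat the $+$ side; the $-$ side is symmetric, exchanging the roles of $B_1^+\cup B_2^-$ and $B_1^-\cup B_2^+$.

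\textbf{Nesting.} We have $P_0^+=C_x^+$. Each $P_j^+=(B_1^+(P_{j-1}^+,D/3)\cup B_2^-(P_{j-1}^+,D/3))\cap P_j$ contains $P_{j-1}^+$, since every vertex is at distance $0<D/3$ from itself in both modes. It also contains it within $P_j$, because $P_{j-1}^+\subseteq P_{j-1}\subseteq P_j$ by induction. Hence $C_x^+=P_0^+\subseteq P_1^+\subseteq\cdots$.

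\textbf{Path containment.} Take $j$ in the relevant range, points $u,v\in P_j^+$, and a path $\pi$ from $u$ to $v$ of length $<D/3$ in $G_1$. Since $P_j^+\subseteq P_j$, the inductive hypothesis says $\pi$ lies in $P_{j+1}$, and so it also lies in the current working graph $\Gg[P_{L-(i-1)+1}]$ where the balls are computed. Every vertex $w$ on $\pi$ satisfies $d_1(u,w)<D/3$ in this graph, so $w\in B_1^+(P_j^+,D/3)$. Together with $w\in P_{j+1}$, this gives $w\in P_{j+1}^+$.

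If instead $\pi$ is a path of length $<D/3$ in $G_2$ from $u$ to $v$, every vertex $w$ on it satisfies $d_2(w,v)<D/3$. Thus $w\in B_2^-(P_j^+,D/3)\cap P_{j+1}=$ part of $P_{j+1}^+$. The key point is that the $G_2$ ball in the construction is taken in the reverse direction, which is exactly the direction available here: the endpoint $v$, not $u$, certifies membership.

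For the $-$ side the argument swaps roles. In the $G_1$ case we use $d_1(w,v)<D/3$ and $B_1^-$; in the $G_2$ case we use $d_2(u,w)<D/3$ and $B_2^+$.

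\textbf{Main obstacle.} The only delicate step is bookkeeping of which graph distances are measured in. We need that the path guaranteed by the inductive hypothesis inside $P_{j+1}$ really exists in the graph where the SSSP defining $P^\pm_{j+1}$ was run. This holds because $P_{j+1}\subseteq P_{L-i+2}$, the working vertex set of the parent call, which the nesting chain supplies. Once that is noted, both the $G_1$ case and the $G_2$ case follow directly, and the index range shifts by one per level exactly as in \cref{clm:pathscontained}.
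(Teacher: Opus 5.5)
Your proof is correct and is the same induction the paper intends, namely the straightforward generalization of \cref{clm:pathscontained}. In the $G_1$ case you place vertices of a short path in $B_1^+$ via its prefixes, and in the $G_2$ case you place them in $B_2^-$ via its suffixes. You also track the working graph explicitly as the parent's $\Gg[P_{L-i+2}]$, which is more careful than the paper's own index handling in the min-diameter proof.
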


The proof is a straightforward generalization of that \cref{clm:pathscontained}, as at every level of padding we take a union of radius $D/3$ balls in both $G_1$ and $G_2$. We obtain the same corollary, as any pair of points of distance $<D/3$ in $\Gg[P_{L-i+1}]$ must have a path of length $<D/3$ between them in $G_1$ or $G_2$, either of which will be maintained in the recursive call.

\begin{cor}
    For any pair of points $x,y\in P_{L-i-1}^+$  we have \[d_{\Gg[P_{L-i+1}]}(x,y)< D/3 \Rightarrow d_{\Gg[P^+_{L-i}]}(x,y) < D/3.\] Likewise, for any pair of points $x,y\in P_{L-i-1}^-$ we have \[d_{\Gg[P_{L-i+1}]}(x,y)< D/3 \Rightarrow d_{\Gg[P^-_{L-i}]}(x,y) < D/3.\]
\end{cor}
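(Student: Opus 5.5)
We prove the first implication; the second follows by the symmetric argument with the roles of $G_1$ and $G_2$ interchanged in each direction. Fix $x,y\in P^+_{L-i-1}$ with $d_{\Gg[P_{L-i+1}]}(x,y)<D/3$. By the definition of 2-mode distance, $d_{\Gg[P_{L-i+1}]}(x,y)=\min_{k\in\{1,2\}} d_{G_k[P_{L-i+1}]}(x,y)$. So there is a mode $k$ and a directed path $\pi$ from $x$ to $y$, of length $<D/3$, using only edges of $E_k[P_{L-i+1}]$. We show that every vertex of $\pi$ lies in $P^+_{L-i}$. Then $\pi$ is a path in $G_k[P^+_{L-i}]$, and hence $d_{\Gg[P^+_{L-i}]}(x,y)\le d_{G_k[P^+_{L-i}]}(x,y)<D/3$.

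\textbf{Case $k=1$.} Every vertex $z$ on $\pi$ satisfies $d_{G_1[P_{L-i+1}]}(x,z)<D/3$, since the prefix of $\pi$ up to $z$ is shorter than $\pi$ (edge weights are nonnegative). Because $x\in P^+_{L-i-1}$, we get $z\in B_1^+(P^+_{L-i-1},D/3)$, where the ball is computed in $\Gg[P_{L-i+1}]$ as the algorithm prescribes. It remains to check $z\in P_{L-i}$. The endpoints $x,y$ lie in $P^+_{L-i-1}\subseteq P_{L-i-1}$. By \cref{clm:pathscontainedmm} applied with $j=L-i-1$, any path of length $<D/3$ in $G_1$ between points of $P_{L-i-1}$ is contained in $P_{L-i}$, so $z\in P_{L-i}$. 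Combining the two facts, $z\in\bigl(B_1^+(P^+_{L-i-1},D/3)\cup B_2^-(P^+_{L-i-1},D/3)\bigr)\cap P_{L-i}=P^+_{L-i}$.

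\textbf{Case $k=2$.} Reversing $\pi$ shows that every vertex $z$ on it satisfies $d_{G_2[P_{L-i+1}]}(z,y)<D/3$, using the suffix of $\pi$ from $z$. Since $y\in P^+_{L-i-1}$, this gives $z\in B_2^-(P^+_{L-i-1},D/3)$. This is where the union of $B_1^+$ and $B_2^-$ in the definition of $P^+_j$ is needed. Membership $z\in P_{L-i}$ again follows from \cref{clm:pathscontainedmm}, which covers paths in $G_2$ as well. Hence $z\in P^+_{L-i}$.

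\textbf{Main obstacle.} The delicate step is the index bookkeeping when invoking \cref{clm:pathscontainedmm}. The claim is stated for $j\in[L-i]$ at depth $i$, so when $L-i-1\ge 1$ it applies directly with $j=L-i-1$. In the degenerate case $L-i-1=0$, where $P_0=C$, we instead use the inductive fact established inside the proof of \cref{clm:pathscontainedmm}: short paths between points of $P_j$ stay inside $P_{j+1}$, with $P_{j+1}$ itself defined as a ball intersected with the next layer. This is the same fact the paper uses in the min-distance corollary. Everything else is immediate from the definitions of the balls.
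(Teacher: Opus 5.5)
Your proof is correct and follows essentially the same route as the paper: you take a short path in whichever mode realizes the 2-mode distance, observe that its vertices lie in $B_1^+$ of $x$ (mode 1) or in $B_2^-$ of $y$ (mode 2), and use \cref{clm:pathscontainedmm} to keep the path inside the padding layers. You are slightly more careful than the paper's write-up, which silently drops the intersection with $P_{L-i}$ in the definition of $P^+_{L-i}$; you explicitly justify $z\in P_{L-i}$ via \cref{clm:pathscontainedmm}.
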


\paragraph{Runtime:} The runtime and success probability analysis are nearly identical to those of \cref{alg:mindiam3approx}.

    \bibliographystyle{alphaurl} 
    \bibliography{refs}
\end{document}